\newtheorem{theorem}{Theorem}[section]
\newtheorem{proposition}{Proposition}[section]
\newtheorem{example}{Example}[section]
\newenvironment{proof}[1][Proof]{\noindent \textbf{#1.} }{\ \ \  $\Box$}
\newtheorem{lemma}{Lemma}[section]
\newtheorem{remark}{Remark}[section]
\title{BSDEs with random default time and their applications to default risk\thanks{This work is partially supported by the National Basic
Research Program of China (973 Program) (Grant No. 2007CB814900)
(Financial Risk).} }
\date{}
 \author{ Shige Peng, Xiaoming Xu\thanks{Corresponding author, E-mail: xmxu@mail.sdu.edu.cn}
 \\ \small{School of Mathematics, Shandong University, Jinan, 250100, China}
 }
\begin{document}

\maketitle

\begin{abstract}
%% Text of abstract
In this paper we are concerned with backward stochastic differential
equations with random default time and their applications to default
risk. The equations are driven by Brownian motion as well as a
mutually independent martingale appearing in a defaultable setting.
We show that these equations have unique solutions and a comparison
theorem for their solutions. As an application, we get a
saddle-point strategy for the related zero-sum stochastic
differential game problem.\\
\par $\textit{Keywords:}$ Backward stochastic differential equation, Random default time,
Comparison theorem, Zero-sum stochastic differential game
\end{abstract}

%% \linenumbers

%% main text

\section{Introduction}\label{sec:intro}

Credit risk is a kind of the most fundamental, most ancient and most
dangerous financial risk. Particularly in recent years it has been
greatly concerned once more. The most extensively studied form of
credit risk is the default risk, that is, the risk that a
counterpart in a financial contract will not fulfil a contractual
commitment to meet her/his obligations stated in the contract. Many
people, such as Bielecki, Jarrow, Jeanblanc, Kusuoka and so on, have
worked on this subject (see e.g. [2-4, 10, 15-17]).

In a defaultable market, the noise is created by the Brownian motion
$B$ as well as a random time $\tau$ which is referred to as a
default time. Then the information at time $t$ we can get is of two
kinds: one from the assets prices, generated by $B_t$ and denoted by
$\mathcal{F}_t$, the other from the default time, generated by the
default process $H_t:=1_{\{\tau \leq t\}}$ and denoted by
$\mathcal{H}_t$. It should be noted here that the default time
$\tau$ is not an $\mathcal{F}$-stopping time in general. The
filtration we consider is the so-called enlarged filtration
$\mathcal{G}:=\mathcal{F}\vee\mathcal{H}$. Then how do we deal with
this case? Roughly speaking, we construct a process $\Gamma$, called
the $\mathcal{F}$-hazard process of $\tau$,  by setting
$\Gamma_t:=-\ln [1-P(\tau\leq t)]$ where $P$ is the historical
probability measure. Then the process $M$, defined by
$M_t:=H_t-\Gamma_{t\wedge\tau}$, is a $\mathcal{G}$-martingale
independent of $B$. Assume that $\Gamma$ is absolutely continuous,
then there exists an $\mathcal{F}$-adapted process $\gamma$, called
the intensity process, such that $\Gamma_t=\int_0^t \gamma_s ds$. By
the well-known Kusuoka's martingale representation theorem, which
states that any $\mathcal{G}$-square integrable martingale can be
represented as the sum of integrals w.r.t $B$ and $M$, we know that
in a defaultable setting, $B$ and $M$ are of great importance.

When studying the utility maximization problem in a defaultable
setting, Bielecki et al. \cite{BJR1} and Lim-Quenez \cite{LQ}
conclude that the value function is a solution of a BSDE with a
quadratic driver, which we call BSDE with random default time in
this paper. Actually this type of BSDEs appears very naturally. For
the evaluation/hedging problem, Bielecki et al. studied the PDE
approach in \cite{BJR3} (see also \cite{BJR5}), where it is assumed
that the defaultable market is complete and the dynamics of the
primary assets are linear SDEs driven by both $B$ and $M$. Their
goal is to replicate a contingent claim $\xi$ which depends on
whether the default event occurs or not. In fact, we know already
that the theory of contingent claim valuation in a complete
default-free market (see e.g. Black-Scholes \cite{BS}, Merton
\cite{M} and so on) can be expressed in terms of classical BSDEs.
Here we will check detailedly in the text that the
evaluation/hedging problem of $\xi$ can be represented as a linear
BSDE with random default time $\tau$ of the following form:
\begin{equation*}\label{equation:intro linear}
Y_t=\xi+\int_t^T (u_s Y_s+ v_s Z_s+ w_s1_{\{\tau
> s\}}\gamma_s \zeta_s)ds-\int_t^T Z_s dB_s-\int_t^T \zeta_s dM_s,
\end{equation*}
which can be solved thanks to the existence of the risk-neutral
measure $Q$ equivalent to the historical probability $P$. In fact
$Q$ is of the following form:
$$Q=\exp[\int_0^T \ln(1+w_s)dH_s-\int_0^T w_s1_{\{\tau > s\}}\gamma_s ds+\int_0^T v_s dB_s-\frac{1}{2}\int_0^T v_s^2ds+\int_0^T u_s ds].$$
Then we have $Y_0=E_Q[\xi]$ which is called the fair price of $\xi$.
While in general, we do not know the exact values of $(u, v, w)$ but
a set $\Theta$ they belong to, which will lead to model uncertainty
or ambiguity (see e.g. [6, 9] for details). Then in this case,
instead of having only one risk-neutral probability measure $Q$
fixed, we will face an uncertain subset of probability measures
$\{P_\theta: \theta=(u, v, w)\in \Theta\}$. For this situation a
robust way to evaluate $\xi$ is its upper price $\hat{Y}_0$ achieved
by a superhedging strategy and $\hat{Y}_0$ can be calculated by
$$\hat{Y}_0=\sup_{\theta\in \Theta}Y_0^\theta=\sup_{\theta\in
\Theta}E_{P_\theta}[\xi],
$$ where $Y_0^\theta$ is in fact the fair price for $\xi$ in a fictitious
market.  In evaluation/hedging problem with this imprecise knowledge
of the risk-neutral measure, we will face a nonlinear BSDE with
random default time (the general form):
\begin{equation}\label{equation:intro}
Y_t=\xi+\int_t^T g(s, Y_s, Z_s,\zeta_s)ds-\int_t^T Z_s dB_s-\int_t^T
\zeta_s dM_s.
\end{equation}
It is worth noting that for the calculation of the upper price
$\hat{Y}_0$, the generator $g$ is given by $$g(s, y, z,
\varsigma)=\sup_{(u_s, v_s, w_s)\in \Theta} (u_s y+ v_s z+ w_s
\varsigma),$$ which can be easily seen from Section \ref{sec:game}.

We are interested in the problem of the existence and uniqueness of
a solution for (\ref{equation:intro}), that is, whether there exists
a unique triple of $\mathcal{G}$-adapted processes $(Y, Z, \zeta)$
satisfying (\ref{equation:intro}).

It is well-known that, in the framework of Brownian filtration, the
general form of BSDE was firstly studied by Pardoux-Peng \cite{PP1}.
Since then, the theory of BSDEs has been studied with great
interest. One of the achievements of this theory is the comparison
theorem. It is due to Peng \cite{P1} and then generalized by
Pardoux-Peng \cite{PP2}, El Karoui et al. \cite{KPQ}. It allows to
compare the solutions of two BSDEs whenever we can compare the
terminal conditions and the generators. These results are applied
widely to default-free markets. For example, BSDE was firstly
applied to the problem of zero-sum stochastic differential games by
Hamadene-Lepeltier \cite{HL1}. From then on, BSDEs were linked with
the game problems closer and closer (see e.g. [8, 12]).
%\cite{HLW}, \cite{HL2}, \cite{KH}, etc.

In this paper, we show that under proper assumptions, BSDE
(\ref{equation:intro}) has a unique solution. Besides we also
establish a comparison theorem. It should be noted here that, the
comparison theorem needs one more condition for the generator than
the existence and unique theorem, which is different from the
classical case. As an application, we deal with a zero-sum
stochastic differential game problem, which can also be seen as a
utility maximization problem under model uncertainty. For the game,
we assume that there are two players $J_1$ and $J_2$ whose
advantages are antagonistic. The dynamics of the controlled system
is
\begin{eqnarray*}
X_t &=& x_0+\int_0^t(b(s, X_{s-}, u_s, v_s)+c(s, X_{s-}, u_s,
v_s)1_{\{\tau> s\}}\gamma_s) ds\\
&& +\int_0^t\sigma(s, X_{s-})dB_s+\int_0^t \kappa(s, X_{s-})dM_s.
\end{eqnarray*}
The player $J_1$ (resp. $J_2$) chooses a control $u$ (resp. $v$).
The object of $J_1$ (resp. $J_2$) is  to minimize (resp. maximize)
the cost functional $J^{u, v}$. In this paper, we show that there
exists a saddle point $(u^*, v^*)$ such that $J(u^*, v) \leq J(u^*,
v^*) \leq J(u, v^*)$ for each $(u, v)$.

The paper is organized as follows: in Section \ref{sec:pre}, we list
some notations and assumptions we will use. In Section
\ref{secsec:bsde}, we will first start with a simple model following
\cite{BJR3}, which in fact implies the new idea, that is, BSDE with
random default time, for credit risk modeling. Then we prove an
existence and uniqueness result for BSDEs with random default time
and also establish a comparison theorem. In the last section (i.e,
Section \ref{sec:game}), we solve a zero-sum stochastic differential
game problem in a defaultable setting as an application of the study
of the previous section. For reader's convenience we present some
basic results in the Appendix.

\section{Notations and assumptions}\label{sec:pre}

Let $\{ B_t; t\geq 0 \}$ be a $d$-dimensional standard Brownian
motion on a probability space $(\Omega, \mathcal{F}, P)$ and
$(\mathcal{F}_t)_{ t\geq 0}$ be its natural filtration. Denote by
$|\cdot|$ the norm in $\mathbb{R}^m$.

Let $\{\tau_i; i=1, 2, \ldots, k\}$ be $k$ nonnegative random
variables satisfying
$$P(\tau_i > 0)=1;\ P(\tau_i >t )>0, \forall t
\in \mathbb{R}_+;\ \ P(\tau_i= \tau_j)=0\ (i\neq j).$$ For each i,
we introduce a right-continuous process $\{H_t^i; t\geq 0\}$ by
setting $H_t^i:=1_{\{ \tau_i \leq t \}}$ and denote by
$\mathbb{H}^i=(\mathcal{H}_t^i)_{ t\geq 0}$ the associated
filtration $\mathcal{H}_t^i=\sigma(H_s^i:0 \leq s \leq t)$.

Just as in the general reduced-form approach, for fixed $T>0$, there
are two kinds of information: one from the assets prices, denoted by
$\mathbb{F}=(\mathcal{F}_t)_{0\leq t \leq T}$, and the other from
the default times $\{\tau_i; i=1,2, \ldots, k\}$, denoted by
$\{\mathbb{H}^i; i=1,2, \ldots, k\}$ from the above. The enlarged
filtration considered is denoted by
$\mathbb{G}=(\mathcal{G}_t)_{0\leq t \leq T}$ where
$\mathcal{G}_t=\mathcal{F}_t \vee \mathcal{H}_t^1 \vee
\mathcal{H}_t^2 \vee \ldots \vee \mathcal{H}_t^k$, which indicates
that each $\tau_i$ is a $\mathbb{G}$-stopping time but not
necessarily an $\mathbb{F}$-stopping time in the general case.

Now we assume the following (see \cite{K}):

$(\bf{A})$ \textit{There exist} $\mathbb{F}$\textit{-adapted
processes} $\gamma^i \geq 0\ (i=1, 2, \ldots, k)$ \textit{such that}
$$M_t^i:=H_t^i-\int_0^t 1_{\{\tau_i >s\}}\gamma_s^ids\ \ (i=1, 2, \ldots, k)$$
\textit{are} $\mathbb{G}$\textit{-martingales under} $P$.

$(\bf{H})$  \textit{Every} $\mathbb{F}$\textit{-local martingale is
a} $\mathbb{G}$\textit{-local martingale}.

It should be mentioned that $(\bf{H})$ is a very general and
essential hypothesis in the area of enlarged filtration (see
\cite{MY}).

The following are just for the sake of simplicity:

(i) notations of vectors:
\begin{eqnarray*}
&& H_t:=(H_t^1, H_t^2, \cdots, H_t^k)^\prime,\ M_t:=(M_t^1, M_t^2,
\cdots, M_t^k)^\prime,
\\
&& 1_{\{\tau>t\}}\gamma_t:=(1_{\{\tau_1>t\}}\gamma_t^1,
1_{\{\tau_2>t\}}\gamma_t^2, \cdots,
1_{\{\tau_k>t\}}\gamma_t^k)^\prime,
\end{eqnarray*} where
$(\cdot)^\prime$ is the transpose;

(ii) notations of sets:

$\bullet$ $L^2(\mathcal{G}_T; \mathbb{R}^m)$ := $\{\xi\in
\mathbb{R}^m$ $|$ $\xi$ is a $\mathcal{G}_T$-measurable random
variable such that $E|\xi|^2< + \infty\};$

$\bullet$ $L_{\mathcal{G}}^2(0, T; \mathbb{R}^m)$ := $\{ \varphi:
\Omega\times [0, T]\rightarrow \mathbb{R}^m$ $|$ $\varphi$ is
progressively measurable and $E\int_0^T |\varphi_t|^2dt< +
\infty\};$

$\bullet$ $S_{\mathcal{G}}^2(0, T; \mathbb{R}^m)$ := $\{\varphi:
\Omega\times [0, T]\rightarrow \mathbb{R}^m$ $|$ $\varphi$ is
progressively measurable and $E[\sup_{0 \leq t \leq T}
|\varphi_t|^2]< + \infty\};$

$\bullet$ $L_{\mathcal{G}}^{2, \tau}(0, T; \mathbb{R}^{m\times k})$
:= $\{ \varphi: \Omega\times [0, T]\rightarrow \mathbb{R}^{m\times
k}$ $|$ $\varphi$ is progressively measurable and $E\int_0^T
|\varphi_t|^2 1_{\{\tau
>t\}}\gamma_tdt:=E\int_0^T \sum_{j=1}^m\sum_{i=1}^k |\varphi_{ji,t}|^21_{\{\tau_i
>t\}}\gamma_t^idt < + \infty \}.$

\section{BSDE with random default time}\label{secsec:bsde}

This section discusses BSDEs with random default time of the general
form. We start by analyzing the following example in a defaultable
financial market.

\subsection{An example}\label{sec:in finance}

At the beginning, we assume that the defaultable market is complete
and arbitrage free,
% just as stated in $\mathbf{Comments\ 3}$ in \cite{BJ}.
that is to say, any $\mathcal{G}_T$-measurable random variable is a
tradable contingent claim.

In the remaining part of this subsection, following Bielecki et al.
\cite{BJR3}, we will with a markovian set-up. For simplicity, we
assume that here $k=1$, the density $\gamma$ is a constant, the
trading occurs on the interval $[0, T]$, and the dynamics of primary
assets are
$$dY_t^i=Y_{t-}^i(\mu_idt +\nu_idB_t +\kappa_idM_t),\ i=1, 2, 3,$$
where $\mu_i$, $\nu_i$, $\kappa_i\geq -1$ are constants and the
primary assets may be default-free $(\kappa_i=0)$ or defaultable
$(\kappa_i \neq 0)$. Our goal is to replicate a contingent claim of
the form
$$Y_T=1_{\{\tau \leq T\}}g_1(Y_T^1, Y_T^2, Y_T^3)+1_{\{\tau > T\}}g_0(Y_T^1, Y_T^2, Y_T^3)=G(H_T, Y_T^1, Y_T^2, Y_T^3),$$
which settles at time $T$. From the completeness of the market, we
know that $Y_T$ is replicatable.

Let us now consider a small investor whose actions cannot affect
market prices and who can decide at time $t\in [0, T]$ what amount
$\theta_t^i$ of the wealth $Y_t$ to invest in the $\textit{i}$th
asset, $i=1, 2, 3.$ Of course, his decisions can only be based on
the current information $\mathcal{G}_t$, i.e, the processes
$\theta=(\theta^1, \theta^2, \theta^3)^\prime$ and
$\theta^1=Y-\theta^2-\theta^3$ are predictable. Following
Harrison-Pliska \cite{HP}, we say a strategy is self-financing if
the wealth process satisfies the equality
$$Y_t=Y_0+ \int_0^t \theta_s^1
\frac{dY_s^1}{Y_{s-}^1}+\int_0^t \theta_s^2
\frac{dY_{s}^2}{Y_{s-}^2}+\int_0^t \theta_s^3
\frac{dY_{s}^3}{Y_{s-}^3},$$ or, equivalently, if the wealth process
satisfies the linear stochastic differential equation
$$dY_t=\sum_{i=1}^3 \theta_t^i \mu_i dt+\sum_{i=1}^3 \theta_t^i \nu_i dB_t+\sum_{i=1}^3 \theta_t^i \kappa_i dM_t.$$
Noting $\theta^1=Y-\theta^2-\theta^3$, we have to find a strategy
$\theta$ satisfying
\begin{equation}\label{bsdewd in finance}
\left\{
\begin{tabular}{rll}
$dY_t$ &=& $[\mu_1
Y_t+\theta_t^2(\mu_2-\mu_1)+\theta_t^3(\mu_3-\mu_1)]dt$
\\
&& $ +[\nu_1 Y_t+\theta_t^2 (\nu_2-\nu_1)+\theta_t^3
(\nu_3-\nu_1)]dB_t$
\\
&& $+[\kappa_1Y_t+\theta_t^2 (\kappa_2-\kappa_1)+\theta_t^3
(\kappa_3-\kappa_1)]dM_t,$
\\
$Y_T$ &=& $G(H_T, Y_T^1, Y_T^2, Y_T^3).$
\end{tabular}
\right.
\end{equation}
Let
$$Z_t=\nu_1 Y_t+\theta_t^2 (\nu_2-\nu_1)+\theta_t^3
(\nu_3-\nu_1),\ \zeta_t1_{\{\tau> t\}}=\kappa_1Y_t+\theta_t^2
(\kappa_2-\kappa_1)+\theta_t^3 (\kappa_3-\kappa_1).$$ That is,
$\zeta$ is well defined only on $[0, \tau \wedge T]$, in fact, we
have $\zeta_t1_{\{\tau> t\}}dM_t=\zeta_tdM_t$ since $dM_t=0$ on
$[\tau \wedge T, T]$. Then by simple computation we can get
$\theta^i=\theta^i(Y, Z, \zeta)$ $(i=2, 3)$ where $\theta^i(\cdot,
\cdot, \cdot)$ $(i=2, 3)$ are linear functions of the following
form:
\begin{eqnarray*}
\theta^2&:=& \theta^2(Y, Z, \zeta)=a_2 Y+ b_2 Z+ c_2 1_{\{\tau>
t\}}\zeta\\
&:=&
\frac{\kappa_1(\nu_3-\nu_1)-\nu_1(\kappa_3-\kappa_1)}{(\nu_2-\nu_1)(\kappa_3-\kappa_1)-(\kappa_2-\kappa_1)(\nu_3-\nu_1)}Y\\
&&
+\frac{\kappa_3-\kappa_1}{(\nu_2-\nu_1)(\kappa_3-\kappa_1)-(\kappa_2-\kappa_1)(\nu_3-\nu_1)}Z\\
&&
-\frac{\nu_3-\nu_1}{(\nu_2-\nu_1)(\kappa_3-\kappa_1)-(\kappa_2-\kappa_1)(\nu_3-\nu_1)}1_{\{\tau>
t\}}\zeta,
\end{eqnarray*}
\begin{eqnarray*}
\theta^3&:=& \theta^3(Y, Z, \zeta)=a_3 Y+ b_3 Z+ c_3 1_{\{\tau>
t\}}\zeta\\
&:=&
\frac{\kappa_1(\nu_2-\nu_1)-\nu_1(\kappa_2-\kappa_1)}{(\nu_3-\nu_1)(\kappa_2-\kappa_1)-(\kappa_3-\kappa_1)(\nu_2-\nu_1)}Y\\
&&
+\frac{\kappa_2-\kappa_1}{(\nu_3-\nu_1)(\kappa_2-\kappa_1)-(\kappa_3-\kappa_1)(\nu_2-\nu_1)}Z\\
&&
-\frac{\nu_2-\nu_1}{(\nu_3-\nu_1)(\kappa_2-\kappa_1)-(\kappa_3-\kappa_1)(\nu_2-\nu_1)}1_{\{\tau>
t\}}\zeta.
\end{eqnarray*}
Write $a= \mu_1+(\mu_2-\mu_1)a_2+(\mu_3-\mu_1)a_3,$
$b=(\mu_2-\mu_1)b_2+(\mu_3-\mu_1)b_3$,
$c=\frac{1}{\gamma}[(\mu_2-\mu_1)c_2+(\mu_3-\mu_1)c_3]$, then
(\ref{bsdewd in finance}) becomes
\begin{equation*}
\left\{
\begin{tabular}{rll}
$dY_t$ &=& $(a Y_t+ b Z_t+ c 1_{\{\tau>
t\}}\gamma\zeta_t)dt+Z_tdB_t+\zeta_tdM_t,$
\\
$Y_T$ &=& $G(H_T, Y_T^1, Y_T^2, Y_T^3),$
\end{tabular}
\right.
\end{equation*}
which is just a linear backward stochastic differential equation
with random default time $\tau$. Suppose that $c < 1$, which is in
fact a more general condition than that in \cite{BJR3}. Set
$$Q_t:=\exp[\ln(1-c)H_t+c \int_0^t 1_{\{\tau > s\}}\gamma ds-bB_t-\frac{1}{2}b^2 t-a t],$$
i.e.,
\begin{equation*}
\left\{
\begin{tabular}{rll}
$dQ_t$ &=& $-Q_{t-}(a dt+b dB_t+c dM_t),$\\
$Q_0$ &=& $1.$
\end{tabular}
\right.
\end{equation*}
Applying It\^{o}'s formula (see Appendix) on $Q_tY_t$, we have
\begin{eqnarray*}
dQ_tY_t &=& c Q_t 1_{\{\tau >
t\}}\gamma\zeta_tdt+Q_t(Z_t-b Y_t)dB_t+Q_{t-}(\zeta_t-c Y_{t-})dM_t-c Q_{t-} \zeta_tdH_t\\
&=&Q_t(Z_t-b Y_t)dB_t+Q_{t-}(\zeta_t-c Y_{t-}-c \zeta_t)dM_t,
\end{eqnarray*}
which implies $Q_tY_t=E^{\mathcal{G}_t}[Q_TG(H_T, Y_T^1, Y_T^2,
Y_T^3)]$. In the financial market, $Y$ will be called the fair price
of the contingent claim $G(H_T, Y_T^1, Y_T^2, Y_T^3)$.
%In the same way, we can study  the following two special cases which
%are discussed for the explanation of PDE approach in \cite{BJR3}:
%\\
%$\it{Case\ 1:}$ Defaultable aseet with total default
%Assume that $Y^1$ and $Y^2$ are default-free, i.e.,
%$\kappa_1=\kappa_2=0,$ and the third asset $Y^3$ is subject to total
%default, i.e, $\kappa_3=-1$.
%\\
%$\it{Case\ 2}:$ Defaultable asset with non-zero recoverry
%Assume that $\nu_1=0$, $\kappa_1=\kappa_2=0$, $\kappa_3> -1$ and
%$\kappa_3 \neq 0$, i.e, $Y^1$ is a savings account, $Y^2$ is a
%default-free asset, and $Y^3$ is a defaultable with non-zero
%recovery.

\subsection{BSDE with random default time}\label{sec:bsde}

The model mainly discussed in this part is:
\begin{equation}\label{3.1}
Y_t=\xi(H_T)+\int_t^T g(s, Y_s, Z_s,\zeta_s)ds-\int_t^T Z_s
dB_s-\int_t^T \zeta_s dM_s.
\end{equation}

In the defaultable financial market, $\xi(H_T)$ represents a
contingent claim needed to be replicated, settled at time $T$,
depending on the event whether the default occurs at time $T$.
Trading occurs on the interval $[0, T]$. $Z$ and $\zeta$ represent
the information of the hedging strategies, for example, in the
linear case (see Subsection \ref{sec:in finance}), we can compute
the hedging strategies by $Z$ and $\zeta$.

The function $g$ is called the generator of (\ref{3.1}). Our object
is to find a triple $(Y_t, Z_t, \zeta_t)\in S_\mathcal{G}^2(0, T;
\mathbb{R}^m) \times L_\mathcal{G}^2(0, T; \mathbb{R}^{m\times d})
\times L_\mathcal{G}^{2, \tau}(0, T; \mathbb{R}^{m\times k})$
satisfying (\ref{3.1}). For this purpose, we first consider a very
simple case: $g(y, z, \varsigma)$ is a real valued process that is
independent of the variable $(y, z, \varsigma)$.

\begin{lemma}\label{lemma:estimate}
For a fixed $\xi(H_T)\in L^2(\mathcal{G}_T; \mathbb{R})$ and
$g_0(\cdot)$ satisfying $$E[(\int_0^T g_0(s) ds)^2]< + \infty,$$
there exists a unique triple of processes $(y_\cdot, z_\cdot,
\varsigma_\cdot)\in L_\mathcal{G}^2(0, T; \mathbb{R}) \times
L_\mathcal{G}^2(0, T; \mathbb{R}^{d}) \times L_\mathcal{G}^{2,
\tau}(0, T; \mathbb{R}^{k})$ satisfying
\begin{equation}\label{euqation:only t}
y_t=\xi(H_T)+\int_t^T g_0(s)ds-\int_t^T z_s dB_s-\int_t^T
\varsigma_s dM_s.
\end{equation}
If $g_0(\cdot)\in L_\mathcal{G}^2(0, T; \mathbb{R})$, then
$(y_\cdot, z_\cdot, \varsigma_\cdot)\in S_\mathcal{G}^2(0, T;
\mathbb{R}) \times L_\mathcal{G}^2(0, T; \mathbb{R}^{d}) \times
L_\mathcal{G}^{2, \tau}(0, T; \mathbb{R}^{k})$. We have the
following basic estimate:
\begin{equation}\label{estimat1}
\begin{tabular}{rll}
&&
$|y_t|^2+E^{\mathcal{G}_t}\int_t^T[\frac{\beta}{2}|y_s|^2+|z_s|^2+\|\varsigma_s\|_\tau^2]e^{\beta(s-t)}ds$
\\
\\
&& $\leq E^{\mathcal{G}_t}|\xi(H_T)|^2
e^{\beta(T-t)}+\frac{2}{\beta}E^{\mathcal{G}_t}\int_t^T|g_0(s)|^2
e^{\beta(s-t)}ds,$
\end{tabular}
\end{equation}
in particular,
\begin{equation}\label{estimat2}
|y_0|^2+E\int_0^T[\frac{\beta}{2}|y_s|^2+|z_s|^2+\|\varsigma_s\|_\tau^2]e^{\beta
s}ds \leq E|\xi(H_T)|^2 e^{\beta
T}+\frac{2}{\beta}E\int_0^T|g_0(s)|^2 e^{\beta s}ds,
\end{equation}
where
$\|\varsigma_s\|_\tau:=|\varsigma_s|1_{\{\tau>s\}}\sqrt{\gamma_s}=(
\sum_{i=1}^k
|\varsigma^{i}|^21_{\{\tau_i>t\}}\gamma_t^i)^{\frac{1}{2}}$ and
$\beta >0$ is an arbitrary constant.
\\
We also have
\begin{equation}\label{estimat3}
E[\sup_{0\leq t\leq T} |y_t|^2]\leq C_T E[|\xi(H_T)|^2+\int_0^T
|g_0(s)|^2ds],
\end{equation}
where the constant $C_T$ depends only on $T$.
\end{lemma}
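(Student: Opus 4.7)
The plan is to construct $(y,z,\varsigma)$ by a martingale representation argument and then derive the three bounds by successive applications of It\^o's formula (together with Burkholder--Davis--Gundy for the pathwise supremum). Since $\xi(H_T)\in L^2(\mathcal{G}_T;\mathbb{R})$ and $E[(\int_0^T g_0(s)\,ds)^2]<+\infty$, the process
\[
N_t := E^{\mathcal{G}_t}\Bigl[\xi(H_T)+\int_0^T g_0(s)\,ds\Bigr]
\]
is a square-integrable $\mathbb{G}$-martingale. Kusuoka's martingale representation theorem (recalled in the introduction) then produces unique processes $z\in L_\mathcal{G}^2(0,T;\mathbb{R}^d)$ and $\varsigma\in L_\mathcal{G}^{2,\tau}(0,T;\mathbb{R}^k)$ with $N_t=N_0+\int_0^t z_s\,dB_s+\int_0^t \varsigma_s\,dM_s$, and setting $y_t:=N_t-\int_0^t g_0(s)\,ds$ yields a triple satisfying (\ref{euqation:only t}). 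Uniqueness will drop out of the estimate (\ref{estimat1}) applied to the difference of two solutions, which solves the same equation with zero terminal value and zero generator.

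For the weighted bound (\ref{estimat1}) I would apply the It\^o formula from the Appendix to $e^{\beta s}|y_s|^2$ on $[t,T]$. The continuous part of $[y]$ contributes $\int e^{\beta s}|z_s|^2\,ds$, while the purely discontinuous part contributes $\sum_i\int e^{\beta s}|\varsigma_s^i|^2\,dH_s^i$; rewriting $dH_s^i=dM_s^i+1_{\{\tau_i>s\}}\gamma_s^i\,ds$ via assumption $(\mathbf{A})$ and taking $\mathcal{G}_t$-conditional expectation (after a standard localization that makes the $B$- and $M$-integrals true martingales) produces
\[
e^{\beta t}|y_t|^2+E^{\mathcal{G}_t}\!\int_t^T e^{\beta s}\bigl(\beta|y_s|^2+|z_s|^2+\|\varsigma_s\|_\tau^2\bigr)ds = E^{\mathcal{G}_t}[e^{\beta T}|\xi(H_T)|^2]+2E^{\mathcal{G}_t}\!\int_t^T e^{\beta s}y_s g_0(s)\,ds.
\]
Young's inequality $2ab\leq \frac{\beta}{2}a^2+\frac{2}{\beta}b^2$ absorbs $\frac{\beta}{2}\int|y|^2$ on the left-hand side, leaving exactly (\ref{estimat1}); taking $t=0$ and full expectation gives (\ref{estimat2}).

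For the pathwise estimate (\ref{estimat3}) I would apply the unweighted It\^o formula ($\beta=0$) to $|y_t|^2$ to express $\sup_{t}|y_t|^2$ in terms of $|\xi(H_T)|^2$, $\int_0^T|g_0(s)|^2\,ds$, and the running maxima of the martingales $\int y_s z_s\,dB_s$ and $\int y_{s-}\varsigma_s\,dM_s$, whose predictable quadratic variations are $\int|y_s z_s|^2\,ds$ and $\int |y_{s-}|^2\|\varsigma_s\|_\tau^2\,ds$ respectively (again using $(\mathbf{A})$ and the disjointness $P(\tau_i=\tau_j)=0$). BDG bounds each supremum by $\varepsilon E[\sup_t|y_t|^2]+C_\varepsilon E[\int_0^T(|z_s|^2+\|\varsigma_s\|_\tau^2)\,ds]$; choosing $\varepsilon$ small enough to be absorbed on the left and invoking (\ref{estimat2}) to control the remaining $L^2$ norms by $|\xi(H_T)|^2$ and $\int|g_0|^2$ delivers the $C_T$-bound. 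The main technical point I expect to have to be careful about is the BDG step for the compensated jump martingale $\int \varsigma\,dM$ with the norm $\|\cdot\|_\tau$ controlling its predictable quadratic variation; everything else is a routine adaptation of the classical Brownian argument.
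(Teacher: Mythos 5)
Your proposal is correct and follows essentially the same route as the paper: existence via Kusuoka's representation of $N_t=E^{\mathcal{G}_t}[\xi(H_T)+\int_0^T g_0(s)\,ds]$, the weighted estimate via It\^o's formula applied to $e^{\beta s}|y_s|^2$ (compensating $dH^i_s=dM^i_s+1_{\{\tau_i>s\}}\gamma^i_s\,ds$) together with Young's inequality, uniqueness from the estimate, and the supremum bound from Doob/BDG combined with (\ref{estimat2}). The only difference is technical bookkeeping: where you localize with stopping times to make the stochastic integrals true martingales (and would then need uniform integrability of $|y_{T_n}|^2$ to pass to the limit), the paper first proves the estimate for bounded $\xi$ and $g_0$ and then removes the boundedness by a truncation/Cauchy-sequence argument -- both are standard and equally valid.
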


\begin{proof}
Define $$N_t=E^{\mathcal{G}_t}[\xi+\int_0^T g_0(s)ds].$$ Obviously
$N_t$ is a square integrable $\mathbb{G}$-martingale. Thanks to
Kusuoka's martingale representation theorem (see Appendix), there
exists a unique pair of adapted process $(z_t, \varsigma_t)\in
L_\mathcal{G}^2(0, T; \mathbb{R}^{d}) \times L_\mathcal{G}^{2,
\tau}(0, T; \mathbb{R}^{k})$ such that
$$N_t=N_0+\int_0^t z_s dB_s+\int_0^t \varsigma_s dM_s.$$
Thus
$$N_t=N_T-\int_t^T z_sdB_s-\int_t^T \varsigma_s dM_s.$$
Denote
$$y_t=N_t-\int_0^tg_0(s)ds=N_T-\int_0^tg_0(s)ds-\int_t^T z_sdB_s-\int_t^T \varsigma_s dM_s.$$
Since $N_T=\xi+\int_0^T g_0(s)ds$, immediately we get
(\ref{euqation:only t}).
\par The uniqueness is a simple consequence of the estimate
(\ref{estimat2}). We only need to prove the priori estimates. To
prove (\ref{estimat1}), we first consider the case where $\xi$ and
$g_0(\cdot)$ are both bounded. Since
$y_t=E^{\mathcal{G}_t}[\xi(H_T)+\int_t^T g_0(s)ds]$, thus the
process $y$ is also bounded.

From the equation (\ref{euqation:only t}), we have
$$dy_s=-g_0(s)ds+z_sdB_s+\varsigma_sdM_s.$$
We then apply It\^{o}'s formula to $y_s^2e^{\beta s}$ (see Example
\ref{eg:integration by parts}) for $s\in [t, T]$:
%$$dY_se^{\beta s}=e^{\beta s}dY_s+Y_sde^{\beta s}+d[Y_s,
%e^{\beta s}]$$
%$$=e^{\beta s}(\beta
%Y_s-g_0(s))ds+e^{\beta s}Z_sdB_s+e^{\beta s}\zeta_sdM_s,$$ and
%moreover
\begin{eqnarray*}
dy_s^2e^{\beta s} &=& e^{\beta s}(\beta
y_s^2-2y_sg_0(s)+|z_s|^2+|\varsigma_s|^21_{\{\tau>s\}}\gamma_s)ds\\
& &+2e^{\beta s}y_s z_sdB_s+e^{\beta s}(2y_s
\varsigma_s+|\varsigma_s|^2)dM_s.
\end{eqnarray*}
Integrating $s$ from $t$ to $T$ and take conditional expectation
with regard to $\mathcal{G}_t$ on both sides, we obtain
\begin{eqnarray*}
& &|y_t|^2+E^{\mathcal{G}_t}\int_t^T[\beta
|y_s|^2+|z_s|^2+|\varsigma_s|^21_{\{\tau>s\}}\gamma_s]e^{\beta(s-t)}ds\\
& &=E^{\mathcal{G}_t}|\xi|^2e^{\beta (T-t)}+E^{\mathcal{G}_t}\int_t^T 2y_sg_0(s)e^{\beta (s-t)}ds\\
& &\leq E^{\mathcal{G}_t}|\xi|^2e^{\beta
(T-t)}+E^{\mathcal{G}_t}\int_t^T
[\frac{\beta}{2}|y_s|^2+\frac{2}{\beta}|g_0(s)|^2]e^{\beta (s-t)}ds.
\end{eqnarray*}
From this it follows (\ref{estimat1}) and (\ref{estimat2}).
\par We now consider the case where $\xi$ and $g_0(\cdot)$ are
possibly unbounded. We set
$$\xi^n:=(\xi\wedge n)\vee(-n),\ \ \ \ g_0(s):=(g_0(s)\wedge n)\vee(-n),$$
and
$$y_t^n:=\xi^n+\int_t^T g_0^n(s)ds-\int_t^T z_s^n dB_s-\int_t^T \varsigma_s^n dM_s.$$
Thanks to the boundedness of $\xi^n,\ \xi^k,\ g_0^n$ and $g_0^k$ for
each positive integer $n$ and $k$, we have
\begin{equation}\label{equation:n estimate}
\begin{tabular}{rll}
& & $|y_t^n|^2+E^{\mathcal{G}_t}\int_t^T[\frac{\beta}{2}
|y_s^n|^2+|z_s^n|^2+|\varsigma_s^n|^21_{\{\tau>s\}}\gamma_s]e^{\beta(s-t)}ds$
\\ \\& & $\leq E^{\mathcal{G}_t}|\xi^n|^2e^{\beta
(T-t)}+\frac{2}{\beta}E^{\mathcal{G}_t}\int_t^T |g_0^n(s)|^2e^{\beta
(s-t)}ds$
\end{tabular}
\end{equation}
and
\begin{equation*}
\begin{tabular}{rll}
& & $E\int_0^T[\frac{\beta}{2}
|y_s^n-y_s^k|^2+|z_s^n-z_s^k|^2+|\varsigma_s^n-\varsigma_s^k|^21_{\{\tau>s\}}\gamma_s]e^{\beta
s}ds$ \\ \\& & $\leq E|\xi^n-\xi^k|^2e^{\beta
T}+\frac{2}{\beta}E\int_0^T |g_0^n(s)-g_0^k(s)|^2e^{\beta s}ds.$
\end{tabular}
\end{equation*}
The second inequality implies that the processes ${y^n}$, ${z^n}$
and ${\varsigma^n}$ are Cauchy sequences in their corresponding
spaces. Thus (\ref{estimat1}) is proved by letting $n$ tend to $+
\infty$ in (\ref{equation:n estimate}).

Easily we can get $y_t\in S_{\mathcal{G}}^2(0, T; \mathbb{R})$ as
(\ref{estimat3}) is a simple consequence of (\ref{estimat2})
together with B-D-G inequality applied to (\ref{euqation:only t}).
\end{proof}

With the above basic estimates, we can now consider the general case
of (\ref{3.1}). We assume that $g(\omega, t, y, z, \varsigma):
\Omega\times[0, T]\times\mathbb{R}^m \times \mathbb{R}^{m\times
d}\times\mathbb{R}^{m\times k}\rightarrow \mathbb{R}^m$ satisfies
the following conditions:

(a) $g(\cdot, 0, 0, 0)\in L_\mathcal{G}^2(0, T; \mathbb{R}^m)$;

(b) the Lipschitz condition: for each $(t, y, z, \varsigma), (t,
\bar{y}, \bar{z}, \bar{\varsigma})\in [0,
T]\times\mathbb{R}^m\times\mathbb{R}^{m\times
d}\times\mathbb{R}^{m\times k}$, there exists a constant $C\geq 0$
such that
$$|g(t, y, z,
\varsigma)-g(t, \bar{y}, \bar{z}, \bar{\varsigma})|\leq
C(|y-\bar{y}|+|z-\bar{z}|+|\varsigma-\bar{\varsigma}|1_{\{\tau
>t\}}\sqrt{\gamma_t}).$$

\begin{theorem}\label{thm:1-main}
Assume that $g$ satisfies (a) and (b), then for any fixed terminal
condition $\xi(H_T)\in L^2(\mathcal{G}_T; \mathbb{R}^m)$, BSDE
(\ref{3.1}) has a unique solution, i.e, there exists a unique triple
of $\mathcal{G}_t-$adapted processes $$(Y_t, Z_t, \zeta_t)\in
S_\mathcal{G}^2(0, T; \mathbb{R}^m) \times L_\mathcal{G}^2(0, T;
\mathbb{R}^{m\times d}) \times L_\mathcal{G}^{2, \tau}(0, T;
\mathbb{R}^{m\times k})$$ satisfying (\ref{3.1}).
\end{theorem}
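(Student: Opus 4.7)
The plan is a Picard/contraction argument on a weighted Banach space, with Lemma \ref{lemma:estimate} both defining the iteration and supplying the a priori estimate that drives the contraction. I would work in
\[
\mathcal{B}:=L^2_\mathcal{G}(0,T;\mathbb{R}^m)\times L^2_\mathcal{G}(0,T;\mathbb{R}^{m\times d})\times L^{2,\tau}_\mathcal{G}(0,T;\mathbb{R}^{m\times k}),
\]
endowed with the norm $\|(Y,Z,\zeta)\|_\beta^2:=E\int_0^T[|Y_s|^2+|Z_s|^2+\|\zeta_s\|_\tau^2]e^{\beta s}\,ds$, with $\beta>0$ to be fixed later.

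Define a map $\Phi:\mathcal{B}\to\mathcal{B}$ by $\Phi(Y,Z,\zeta):=(\bar Y,\bar Z,\bar\zeta)$, where $(\bar Y,\bar Z,\bar\zeta)$ is the triple produced by Lemma \ref{lemma:estimate} (applied componentwise for each of the $m$ coordinates) with driver $g_0(s):=g(s,Y_s,Z_s,\zeta_s)$ and terminal value $\xi(H_T)$. Assumptions (a) and (b), together with the identity $\|\zeta_s\|_\tau^2=|\zeta_s|^2 1_{\{\tau>s\}}\gamma_s$ (which ties the Lipschitz norm on $\varsigma$ in (b) exactly to the $L^{2,\tau}_\mathcal{G}$ norm), guarantee that $g_0\in L^2_\mathcal{G}(0,T;\mathbb{R}^m)$, so the lemma applies and produces $(\bar Y,\bar Z,\bar\zeta)\in S^2_\mathcal{G}\times L^2_\mathcal{G}\times L^{2,\tau}_\mathcal{G}\subset\mathcal{B}$. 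A fixed point of $\Phi$ is exactly a solution of (\ref{3.1}).

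For the contraction, take $(Y^i,Z^i,\zeta^i)\in\mathcal{B}$, $i=1,2$, with images $(\bar Y^i,\bar Z^i,\bar\zeta^i)$, and write $\Delta$ for differences. The triple $(\Delta\bar Y,\Delta\bar Z,\Delta\bar\zeta)$ solves a BSDE of the form (\ref{euqation:only t}) with zero terminal value and driver $\Delta g(s)=g(s,Y^1_s,Z^1_s,\zeta^1_s)-g(s,Y^2_s,Z^2_s,\zeta^2_s)$. Applying the a priori estimate (\ref{estimat2}) and then the Lipschitz bound $|\Delta g(s)|^2\le 3C^2\bigl(|\Delta Y_s|^2+|\Delta Z_s|^2+\|\Delta\zeta_s\|_\tau^2\bigr)$ from (b) gives
\[
E\int_0^T\Bigl[\tfrac{\beta}{2}|\Delta\bar Y_s|^2+|\Delta\bar Z_s|^2+\|\Delta\bar\zeta_s\|_\tau^2\Bigr]e^{\beta s}\,ds \le \frac{6C^2}{\beta}\,\|(\Delta Y,\Delta Z,\Delta\zeta)\|_\beta^2.
\]
Choosing $\beta$ large enough that $\beta/2\ge 1$ and $6C^2/\beta\le 1/2$ bounds the left side below by $\|\Phi(Y^1,Z^1,\zeta^1)-\Phi(Y^2,Z^2,\zeta^2)\|_\beta^2$ and makes $\Phi$ a strict contraction on $(\mathcal{B},\|\cdot\|_\beta)$. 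The Banach fixed point theorem then delivers a unique $(Y,Z,\zeta)\in\mathcal{B}$ solving (\ref{3.1}), and the $S^2_\mathcal{G}$-regularity of $Y$ is inherited from the final clause of Lemma \ref{lemma:estimate} (via (\ref{estimat3}) applied to the driver $g(\cdot,Y,Z,\zeta)\in L^2_\mathcal{G}$).

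The only point requiring care is the bookkeeping of the default-time weight: one must verify that the factor $1_{\{\tau>s\}}\sqrt{\gamma_s}$ appearing in Lipschitz condition (b) matches precisely the weight built into $\|\cdot\|_\tau$ and into the a priori estimate (\ref{estimat2}), so that no spurious $\gamma$ or indicator is introduced when passing from Lipschitz to contraction. Beyond this minor bookkeeping, the argument mirrors the classical Pardoux--Peng fixed-point proof for Brownian BSDEs.
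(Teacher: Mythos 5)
Your proposal is correct and follows essentially the same route as the paper: freeze the driver, invoke Lemma \ref{lemma:estimate} to define the map, use estimate (\ref{estimat2}) plus the Lipschitz condition (b) to get a strict contraction in the $\beta$-weighted norm (the paper takes $\beta=12(C^2+1)$, matching your choice of $\beta$ with $\beta/2\ge 1$ and $6C^2/\beta\le 1/2$), and recover the $S^2_\mathcal{G}$-regularity of $Y$ from the lemma afterwards. No gaps to report.
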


\begin{proof}
First we introduce a norm in $L_\mathcal{G}^2(0, T; \mathbb{R}^m)
\times L_\mathcal{G}^2(0, T; \mathbb{R}^{m\times d}) \times
L_\mathcal{G}^{2, \tau}(0, T; \mathbb{R}^{m\times k})$:
$$\|(u., v., w. )\|_\beta\equiv\{E\int_0^T(|u_s|^2+|v_s|^2+\|w_s\|_\tau^2)e^{\beta s}ds\}^{\frac{1}{2}}.$$
We set
$$Y_t=\xi(H_T)+\int_t^Tg(s, y_s, z_s, \varsigma_s)ds-\int_t^TZ_sdB_s-\int_t^T\zeta_sdM_s.$$
We define the following mapping $I$ from $L_\mathcal{G}^2(0, T;
\mathbb{R}^m) \times L_\mathcal{G}^2(0, T; \mathbb{R}^{m\times d})
\times L_\mathcal{G}^{2, \tau}(0, T; \mathbb{R}^{m\times k})$ into
itself: $(Y., Z., \zeta.)=I[(y., z., \varsigma.)].$

Next we will prove that $I$ is a strict contraction mapping under
the norm $\|\cdot\|_\beta$. For any two elements $(y, z, \varsigma)$
and $(\bar{y}, \bar{z}, \bar{\varsigma})\in L_\mathcal{G}^2(0, T;
\mathbb{R}^m) \times L_\mathcal{G}^2(0, T; \mathbb{R}^{m\times d})
\times L_\mathcal{G}^{2, \tau}(0, T; \mathbb{R}^{m\times k})$, we
set
$$(Y, Z, \zeta)=I[(y, z, \varsigma)],\ \ (\bar{Y}, \bar{Z}, \bar{\zeta})=I[(\bar{y}, \bar{z}, \bar{\varsigma})],$$
and denote their differences by $(\hat{y}, \hat{z},
\hat{\varsigma})=(y-\bar{y}, z-\bar{z}, \varsigma-\bar{\varsigma})$,
$(\hat{Y},\hat{Z}, \hat{\zeta})=(Y-\bar{Y},Z-\bar{Z},
\zeta-\bar{\zeta})$. By the basic estimate (\ref{estimat2}), we have
$$E\int_0^T(\frac{\beta}{2}|\hat{Y}_s|^2+|\hat{Z}_s|^2+\|\hat{\zeta_s}\|_\tau^2)e^{\beta s}ds\leq \frac{2}{\beta}E\int_0^T|g(s, y_s, z_s, \varsigma_s)-g(s, \bar{y}, \bar{z}_s, \bar{\varsigma}_s)|^2e^{\beta s}ds.$$
Since $g$ satisfies the Lipschitz condition, we have
$$E\int_0^T(\frac{\beta}{2}|\hat{Y}_s|^2+|\hat{Z}_s|^2+\|\hat{\zeta_s}\|_\tau^2)e^{\beta
s}ds\leq
\frac{6C^2}{\beta}E\int_0^T(|\hat{y}_s|^2+|\hat{z}_s|^2+\|\hat{\varsigma_s}\|_\tau^2)e^{\beta
s}ds.$$ Let $\beta=12(C^2+1)$, then we get
$$E\int_0^T(|\hat{Y}_s|^2+|\hat{Z}_s|^2+\|\hat{\zeta_s}\|_\tau^2)e^{\beta
s}ds\leq
\frac{1}{2}E\int_0^T(|\hat{y}_s|^2+|\hat{z}_s|^2+\|\hat{\varsigma_s}\|_\tau^2)e^{\beta
s}ds,$$ or
$$\|(\hat{Y},\hat{Z}, \hat{\zeta})\|_\beta\leq \frac{1}{\sqrt{2}}\|(\hat{y}, \hat{z},
\hat{\varsigma})\|_\beta.$$ Thus $I$ is a strict contraction mapping
of $L_\mathcal{G}^2(0, T; \mathbb{R}^m) \times L_\mathcal{G}^2(0, T;
\mathbb{R}^{m\times d}) \times L_\mathcal{G}^{2, \tau}(0, T;
\mathbb{R}^{m\times k})$. It follows by the fixed point theorem that
BSDE (\ref{3.1}) has a unique solution. From (a) and (b), obviously
$g(\cdot, Y_\cdot, Z_\cdot, \zeta_\cdot)\in L_\mathcal{G}^2(0, T;
\mathbb{R}^m)$. Thus by Lemma \ref{lemma:estimate}, we have $Y\in
S_\mathcal{G}^2(0, T; \mathbb{R}^m)$.
\end{proof}

\begin{remark}\label{remark:generator g}
In the above theorem, from the conditions that the generator $g$
satisfies, we know that here $g$ is independent of the last element
$\varsigma$ after the default occurs, i.e., $g(t, y, z,
\varsigma)\equiv g(t, y, z)$ on $t\in [\tau \wedge T, T].$ Its
financial explanation is that after the default occurs the influence
factor on the contingent claim is apart from the defaultable risky
part absolutely.
\end{remark}

\begin{remark}\label{remark:unique}
The solution of (\ref{3.1}) is unique, that is to say, if both $(Y,
Z, \zeta)$ and $(\bar{Y}, \bar{Z}, \bar{\zeta})\in
S_\mathcal{G}^2(0, T; \mathbb{R}^m) \times L_\mathcal{G}^2(0, T;
\mathbb{R}^{m\times d}) \times L_\mathcal{G}^{2, \tau}(0, T;
\mathbb{R}^{m\times k})$ satisfy (\ref{3.1}), then
$$E\int_0^T|Y_t-\bar{Y}_t|^2dt=0,
E\int_0^T|Z_t-\bar{Z}_t|^2dt=0, E\int_0^T|\zeta_t-\bar{\zeta}_t|^2
1_{\{\tau>t\}}\gamma_tdt=0.$$
\end{remark}

\begin{remark}\label{remark:z^2}
The uniqueness of $\{\zeta_t; t \in [0, T] \}$ can be explained in
this way: $\{\zeta_t; t \in [0, T] \}$ is unique, that is to say,
$\{\zeta_t; t \in [0, T] \}$ can only be uniquely determined on the
random interval $[0, \tau \wedge T]\cap \{t:\gamma_t\neq 0\}$, i.e,
its effective definition domain is just the set $[0, \tau \wedge
T]\cap \{t:\gamma_t\neq 0\}$. On the interval $[\tau \wedge T,
T]\cup \{t:\gamma_t= 0\}$, $\zeta_.$ can be arbitrary adapted random
process. In fact, this is a direct conclusion of the truth that
$dM_t\equiv 0$ on $[\tau \wedge T, T]\cup \{t:\gamma_t= 0\}$, indeed
$M_t\equiv 1-\int_0^{\tau \wedge T} \gamma_s ds$ on $[\tau \wedge T,
T]$.
\end{remark}

\subsection{Comparison theorem for 1-dimensional BSDEs with random default time}\label{sec:comparison}

Consider the following two 1-dimensional BSDEs with random default
time:
\begin{equation}\label{4.1}
Y_t=\xi(H_T)+\int_t^T g(s, Y_s, Z_s, \zeta_s)ds-\int_t^T
Z_sdB_s-\int_t^T\zeta_sdM_s,
\end{equation}
\begin{equation}\label{4.2}
\overline{Y}_t=\overline{\xi}(H_T)+\int_t^T \overline{g}_s
ds-\int_t^T \overline{Z}_sdB_s-\int_t^T\overline{\zeta}_sdM_s.
\end{equation}

For the generator function $g$, we introduce one more assumption:
%(c) $g(\cdot, 0, 0, 0)\in L_\mathcal{G}^2(0, T; \mathbb{R})$;
%(d) the Lipschitz condition: for each $(t, y, z, \varsigma), (t,
%\bar{y}, \bar{z}, \bar{\varsigma})\in [0,
%T]\times\mathbb{R}\times\mathbb{R}^{d}\times\mathbb{R}^{k}$, there
%exists a constant $C \geq 0$ such that
%$$|g(t, y, z,
%\varsigma)-g(t, \bar{y}, \bar{z}, \bar{\varsigma})|\leq
%C(|y-\bar{y}|+|z-\bar{z}|+|\varsigma-\bar{\varsigma}|1_{\{\tau
%>t\}}\sqrt{\gamma_t});$$

(c)for each $(t, y, z)\in [0, T]\times\mathbb{R}\times\mathbb{R}^d$,
$(\varsigma, \bar{\varsigma})\in \mathbb{R}^k\times\mathbb{R}^k$,
$(\varsigma^i-\bar{\varsigma}^i)1_{\{\tau_i>t\}}\gamma_t^i \neq 0$,
the following holds:
$$\frac{g(t, y, z, \tilde{\varsigma}^{i-1})-g(t, y, z, \tilde{\varsigma}^{i})}{(\varsigma^i-\bar{\varsigma}^i)1_{\{\tau_i>t\}}\gamma_t^i} > -1,$$
where $\tilde{\varsigma}^i=(\bar{\varsigma}^1, \bar{\varsigma}^2,
\cdots, \bar{\varsigma}^i, \varsigma^{i+1}, \varsigma^{i+2}, \cdots,
\varsigma^k)$ and $\varsigma^i$ is the $i$-th component of
$\varsigma$.

\begin{theorem}\label{thm:comparison}
Suppose $\xi, \overline{\xi}$ satisfy the same assumptions as in
Theorem \ref{thm:1-main}, $g$ satisfies (a)-(c), $\overline{g}_s \in
L_\mathcal{G}^2(0, T; \mathbb{R})$. Let $(Y, Z, \zeta)$,
$(\overline{Y}, \overline{Z}, \overline{\zeta})$ be the unique
solutions of (\ref{4.1}), (\ref{4.2}) respectively. If
$$\xi \geq \overline{\xi},\ \ g(t, \overline{Y}_t, \overline{Z}_t, \overline{\zeta}_t) \geq \overline{g}_t,\ \ a.e., a.s.,$$
then
$$Y_t \geq \overline{Y}_t,\ \ \ \ a.e.,a.s..$$
Besides, the following holds true (the strict comparison theorem):
$$Y_0=\overline{Y}_0\ \Leftrightarrow \ \xi=\overline{\xi},\ \ g(t, \overline{Y}_t, \overline{Z}_t, \overline{\zeta}_t)\equiv\overline{g}_t.$$
\end{theorem}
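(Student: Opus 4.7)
The plan is to use the standard linearization-and-adjoint-process strategy, suitably adapted to the jump structure of $M$. Set $\hat Y = Y - \overline Y$, $\hat Z = Z - \overline Z$, $\hat\zeta = \zeta - \overline\zeta$, $\hat\xi = \xi - \overline\xi$. Subtracting (\ref{4.2}) from (\ref{4.1}) and adding and subtracting $g(s, \overline Y_s, \overline Z_s, \overline\zeta_s)$, I would rewrite the driver of the difference equation as
\begin{equation*}
g(s, Y_s, Z_s, \zeta_s) - \overline g_s = a_s \hat Y_s + b_s \hat Z_s + \sum_{i=1}^k c_s^i \hat\zeta_s^i\, 1_{\{\tau_i > s\}}\gamma_s^i + \phi_s,
\end{equation*}
where the coefficients are the usual Hadamard-type increment ratios (zero on the null set where the denominator vanishes) and $\phi_s := g(s, \overline Y_s, \overline Z_s, \overline\zeta_s) - \overline g_s \geq 0$ by assumption. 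For the $\zeta$-coefficient I would use the telescoping through the intermediate vectors $\tilde\varsigma^i$ introduced in condition (c), writing the $i$th increment as $c_s^i (\zeta_s^i - \overline\zeta_s^i) 1_{\{\tau_i > s\}}\gamma_s^i$; the point of (c) is precisely that $c_s^i > -1$ whenever the denominator is nonzero, while the Lipschitz assumption (b) controls $c_s^i \gamma_s^i 1_{\{\tau_i>s\}}$ together with $a_s$ and $b_s$.

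Next I would introduce the adjoint density $\Gamma$ defined by the Doléans--Dade SDE
\begin{equation*}
d\Gamma_t = \Gamma_{t-}\Bigl(a_t\,dt + b_t\,dB_t + \sum_{i=1}^k c_t^i\, dM_t^i\Bigr), \qquad \Gamma_0 = 1.
\end{equation*}
Since the jump of $\Gamma$ at $\tau_i$ is the multiplicative factor $1 + c_{\tau_i}^i$, condition (c) is exactly what is needed to guarantee $\Gamma_t > 0$ for all $t$. Applying Itô's product formula (using that the jumps of distinct $M^i$ are mutually exclusive, so $[M^i, M^j] = 0$ for $i\neq j$, and $d\langle M^i, M^i\rangle_t = 1_{\{\tau_i > t\}}\gamma_t^i\,dt$) to $\Gamma_t \hat Y_t$, the drift terms involving $a$, $b$, and $c^i 1_{\{\tau_i>s\}}\gamma_s^i$ cancel exactly, leaving
\begin{equation*}
d(\Gamma_t \hat Y_t) = -\Gamma_t \phi_t\,dt + dN_t,
\end{equation*}
for an appropriate $\mathbb{G}$-local martingale $N$.

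After a standard localization and taking conditional expectations, I obtain
\begin{equation*}
\Gamma_t \hat Y_t = E^{\mathcal{G}_t}\Bigl[\Gamma_T \hat\xi + \int_t^T \Gamma_s \phi_s\,ds\Bigr].
\end{equation*}
Since $\Gamma > 0$, $\hat\xi \geq 0$ and $\phi \geq 0$, the right-hand side is nonnegative, hence $\hat Y_t \geq 0$, which is the comparison inequality. For the strict version, at $t = 0$ we have $\Gamma_0 = 1$, so $Y_0 = \overline Y_0$ forces $E[\Gamma_T \hat\xi + \int_0^T \Gamma_s \phi_s\,ds] = 0$; strict positivity of $\Gamma$ together with nonnegativity of the two integrands then yields $\hat\xi = 0$ a.s.\ and $\phi_s = 0$ a.e., a.s., which is the required equivalence.

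The main obstacle I anticipate is the integrability required to pass from local to true martingale: the coefficient $c_s^i$ need not be bounded (only $c_s^i\gamma_s^i 1_{\{\tau_i>s\}}$ is controlled by the Lipschitz constant via $\sqrt{\gamma_s^i}$), so the Doléans--Dade exponential $\Gamma$ may fail to be in $L^2$ uniformly. I would handle this by a stopping-time argument, truncating at times $T_n$ where $\Gamma$ remains bounded, deriving the identity on $[0, T_n]$, and then sending $n \to \infty$ using the nonnegativity of $\hat\xi$ and $\phi$ together with Fatou's lemma; no uniform integrability is needed because only the sign of $\hat Y_t$ is at stake.
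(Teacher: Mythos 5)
Your proposal is essentially the paper's proof: the same linearization through the intermediate vectors $\tilde\varsigma^i$, the same coefficients $a_s,b_s,c_s^i$ with $|a_s|,|b_s|\leq C$ and $c_s^i>-1$ from (b)--(c), and your adjoint $\Gamma$ is exactly the paper's process $Q$ solving $dQ_s=Q_{s-}(a_s\,ds+b_s\,dB_s+c_s\,dM_s)$, $Q_0=1$; the It\^o product computation, the representation $Q_t\hat Y_t=E^{\mathcal G_t}[Q_T\hat\xi+\int_t^TQ_s\hat g_s\,ds]$, and the strict comparison read off at $t=0$ all coincide. The only divergence is your final localization paragraph: the paper simply takes conditional expectations (implicitly treating the stochastic integrals as true martingales), whereas your Fatou justification as stated is not quite right, because before the limit the bracket contains $\Gamma_{S_n}\hat Y_{S_n}$ with $\hat Y_{S_n}$ of unknown sign, so Fatou needs an integrable lower bound that is not automatic. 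The gap is repairable along your own lines: condition (b) gives $|c_s^i|^2 1_{\{\tau_i>s\}}\gamma_s^i\leq C^2$, hence the squared jump exponential equals $e^{\int_0^\cdot |c_s|^2 1_{\{\tau>s\}}\gamma_s ds}\leq e^{C^2T}$ times a nonnegative local martingale (a supermartingale), which combined with the bounded $a,b$ parts yields $E[\sup_t\Gamma_t^2]<\infty$; since $\hat Y\in S^2_{\mathcal G}$, the family $\Gamma_{S_n}\hat Y_{S_n}$ is dominated by $\sup_t\Gamma_t\cdot\sup_t|\hat Y_t|\in L^1$ and dominated convergence (plus monotone convergence for the $\phi$ term) finishes the passage to the limit — a point the paper itself leaves unaddressed.
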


\begin{proof}
Let $\hat{\xi}=\xi-\overline{\xi}$, $\hat{Y}_s=Y_s-\overline{Y}_s$,
$\hat{Z}_s=Z_s-\overline{Z}_s$,
$\hat{\zeta}_s=\zeta_s-\overline{\zeta}_s$, $\hat{g}_s=g(s,\
\overline{Y}_s,\ \overline{Z}_s,\
\overline{\zeta}_s)-\overline{g}_s$, then we have
\begin{equation*}
\left\{
\begin{tabular}{rll}
$-d \hat{Y}_s$ &=& $(a_s \hat{Y}_s+b_s \hat{Z}_s+c_s
\hat{\zeta}_s1_{\{\tau > s\}}\gamma_s+\hat{g}_s)ds-\hat{Z}_s dB_s-
\hat{\zeta}_s dM_s,$\\\\
$\hat{Y}_T$ &=& $\hat{\xi},$
\end{tabular}
\right.
\end{equation*}
where
\begin{equation*}
\begin{tabular}{rll}
& $a_s:=\left\{
\begin{tabular}{ll}
$\frac{g(s,\ Y_s,\ Z_s,\ \zeta_s)-g(s,\ \overline{Y}_s,\ Z_s,\
\zeta_s)}{Y_s-\overline{Y}_s},$ & \ \ \ if $Y_s\neq \overline{Y}_s,$
\\
$0,$ & \ \ \ if $Y_s = \overline{Y}_s,$
\end{tabular}
\right. $
\\
& $b_s:=\left\{
\begin{tabular}{ll}
$\frac{g(s,\ \overline{Y}_s,\ Z_s,\ \zeta_s)-g(s,\ \overline{Y}_s,\
\overline{Z}_s,\ \zeta_s)}{Z_s-\overline{Z}_s},$ & \ \ \ if $Z_s\neq
\overline{Z}_s,$ \\
$0,$ & \ \ \ if $Z_s= \overline{Z}_s,$
\end{tabular}
\right. $
\\
& $c_s^i:=\left\{
\begin{tabular}{ll}
$\frac{g(s,\ \overline{Y}_s,\ \overline{Z}_s,\
\widetilde{\zeta}_s^{i-1})-g(s,\ \overline{Y}_s,\ \overline{Z}_s,\
\widetilde{\zeta}^{i}_s)}{(\zeta_s^i-\overline{\zeta}_s^i)1_{\{\tau_i
> s\}}\gamma_s^i},$ & \ \ \ if
$(\zeta_s^i-\overline{\zeta}_s^i)1_{\{\tau_i
>
s\}}\gamma_s^i\neq 0,$ \\
$0,$ & \ \ \ if $(\zeta_s^i-\overline{\zeta}_s^i)1_{\{\tau_i >
s\}}\gamma_s^i=0.$
\end{tabular}
\right. $
\end{tabular}
\end{equation*}
Since $g$ satisfies (b) and (c), thus $|a_s| \leq C$, $|b_s|\leq C$
and $c_s^i > -1$. Set
$$Q_s:=\exp[\int_0^s \ln(1+c_u)dH_u-\int_0^s c_u1_{\{\tau > u\}}\gamma_u du+\int_0^s b_u dB_u-\frac{1}{2}\int_0^s b_u^2du+\int_0^s a_u du],$$
i.e,
\begin{equation*}
\left\{
\begin{tabular}{rll}
$dQ_s$ &=& $Q_{s-}(a_s ds+b_s dB_s+c_s dM_s),$\\\\
$Q_0$ &=& $1.$
\end{tabular}
\right.
\end{equation*}
Applying It\^{o}'s formula on $Q_s\hat{Y}_s$, we have
\begin{eqnarray*}
dQ_s\hat{Y}_s &=& -Q_s (c_s\hat{\zeta}_s1_{\{\tau >
s\}}\gamma_s+\hat{g}_s)ds+Q_s(\hat{Z}_s+b_s\hat{Y}_s)dB_s\\
&& +Q_{s-}(\hat{\zeta}_s+c_s\hat{Y}_{s-})dM_s+Q_{s-}c_s\hat{\zeta}_sdH_s\\
&=& -Q_s
\hat{g}_sds+Q_s(\hat{Z}_s+b_s\hat{Y}_s)dB_s+Q_{s-}(\hat{\zeta}_s+c_s\hat{Y}_{s-}+c_s\hat{\zeta}_s)dM_s.
\end{eqnarray*}
Integrate from $t$ to $T$ and take conditional expectation w.r.t
$\mathcal{G}_t$ on both sides:
$$Q_t\hat{Y}_t=E^{\mathcal{G}_t}[Q_T\hat{Y}_T+\int_t^T Q_s\hat{g}_sds]\geq 0,\ \ a.e., a.s..$$
Then $\hat{Y}_t \geq 0$ immediately follows.
\end{proof}

\begin{remark}\label{remark:c_s}
In the above, the definition of $c_s$ is proper. For simplicity, we
only discuss the case $k=1$. Indeed, for the case when
$(\zeta_s-\overline{\zeta}_s)1_{\{\tau
> s\}}\gamma_s=0$, we should have the following equality:
\begin{eqnarray*}
& & a_s \hat{Y}_s+b_s \hat{Z}_s+c_s \hat{\zeta}_s1_{\{\tau >
s\}}\gamma_s+\hat{g}_s\\
& & = g(s, Y_s, Z_s, \zeta_s)-g(s, \overline{Y}_s, \overline{Z}_s,
\zeta_s)+g(s, \overline{Y}_s, \overline{Z}_s,
\overline{\zeta}_s)-\overline{g}_s\\
&& = g(s, Y_s, Z_s, \zeta_s)-\overline{g}_s,
\end{eqnarray*}
which will hold if
$$(\zeta_s-\overline{\zeta}_s)1_{\{\tau
> s\}}\gamma_s=0\Rightarrow \zeta_s-\overline{\zeta}_s=0.$$
For this we can refer to Remark \ref{remark:generator g} and
\ref{remark:z^2}, more detailedly, if $1_{\{\tau
> s\}}=0$ then $g(s,\ \overline{Y}_s,\ \overline{Z}_s,\
\overline{\zeta}_s)\equiv g(s,\ \overline{Y}_s,\ \overline{Z}_s)$,
and if $\gamma_s=0$ then $\overline{\zeta}$ can be arbitrary and we
can choose $\overline{\zeta}=\zeta$.
\end{remark}

\begin{remark}\label{remark:g}
Condition (c) for the generator $g$ is significant for the
comparison theorem. In the following we give an example which
indicates that the strict comparison theorem will not hold if $g$
does not satisfy (c).
\end{remark}

\begin{example}\label{counterexample}
Suppose that k=1, $\xi=H_T$, $\overline{\xi}=0$, $g(t, y, z,
\varsigma)=1_{\{\tau > t\}}\sqrt{\gamma_t}-1_{\{\tau >
t\}}\sqrt{\gamma_t}(\sqrt{\gamma_t}+1)\varsigma$,
$\overline{g}_t=0$.
% and $\{\gamma_t; 0\leq t\leq T\}$ is bounded.
Clearly $g$ does not satisfy (c). Consider the following two BSDEs:
\begin{equation}\label{4.3}
Y_t=H_T+\int_t^T (1_{\{\tau > s\}}\sqrt{\gamma_s}-1_{\{\tau >
s\}}\sqrt{\gamma_s}(\sqrt{\gamma_s}+1)\zeta_s) ds-\int_t^T
Z_sdB_s-\int_t^T\zeta_sdM_s,
\end{equation}
\begin{equation}\label{4.4}
\overline{Y}_t=0-\int_t^T
\overline{Z}_sdB_s-\int_t^T\overline{\zeta}_sdM_s.
\end{equation}
It is easy to check that $(H_t, 0, 1)$, $(0, 0, 0)$ are the unique
solutions of (\ref{4.3}), (\ref{4.4}) respectively.
\\
Then we have
$$\xi= H_T \geq \overline{\xi}= 0, g(t, \overline{Y}_t, \overline{Z}_t,
\overline{\zeta}_t)=g(t, 0, 0, 0)=1_{\{\tau > t\}}\sqrt{\gamma_t}
\geq 0 =\overline{g}_t,$$ while in the meantime we get
$$Y_0=\overline{Y}_0=0, but\ P(\xi > \overline{\xi}) > 0\ and\ (L\times P)(g(t, \overline{Y}_t, \overline{Z}_t,
\overline{\zeta}_t) > \overline{g}_t) > 0, $$ where $L$ denotes
Lebesgue measure.
\end{example}

The comparison theorem, which allows us to compare the solutions of
two BSDEs with random default time, can ensure the attainability of
the upper price of a contingent claim in the evaluation/hedging
problem. The main idea can be seen in the next section.

\section{Application in zero-sum stochastic differential game
problem}\label{sec:game}

We are now going to study the link between the zero-sum stochastic
differential games in the defaultable setting and the BSDEs with
random default time studied in the previous section. First let us
describe the framework of the zero-sum game we consider.

Assume here that $m=d=k$ and $\gamma_t\geq 0$ is bounded. Let $x_0
\in \mathbb{R}^m$ and let $X_t$ be the solution of the following
stochastic differential equation:
$$X_t= x_0+ \int_0^t \sigma(s, X_{s-}) dB_s+ \int_0^t \kappa(s, X_{s-}) dM_s,$$
where the mapping $\sigma: [0, T]\times \mathbb{R}^m \rightarrow
\mathbb{R}^{m\times m}$ and $\kappa: [0, T]\times \mathbb{R}^m
\rightarrow \mathbb{R}^{m\times m}$ satisfy the following
assumptions:

(i) for $1 \leq i, j \leq m$, $\sigma_{ij}$ and $\kappa_{ij}$ are
progressively measurable;

(ii) for any $(t, x)\in [0, T]\times \mathbb{R}^m$, there exists a
constant $C_1 \geq 0$ such that $$|\sigma(t, x)|+|\kappa(t, x)|\leq
C_1(1+|x|);$$

(iii) for any $(t, x), (t, y) \in [0, T]\times \mathbb{R}^m$, there
exists a constant $C_2\geq 0$ such that $$|\sigma(t, x)-\sigma(t,
y)|+|\kappa(t, x)-\kappa(t, y)|\leq C_2|x-y|;$$

(iv) $\sigma(t, x)$, $\kappa(t, x)$ are invertible and
$\sigma^{-1}(t, x)$, $\kappa^{-1}(t, x)$ are bounded.

Then the process $\{X_t; t\in [0, T]\}$ exists and is unique.

Let $U$ (resp. $V$) be a compact metric space and $\mathcal{U}$
(resp. $\mathcal{V}$) be the space of all progressively measurable
processes $u=(u_t)_{t\in [0,T]}$ (resp. $v=(v_t)_{t\in [0, T]}$)
with values in  $U$ (resp. $V$).

Let the drift function $b$ map $[0, T]\times \mathbb{R}^m \times U
\times V$ into $\mathbb{R}^m$. Furthermore, $b$ is supposed to
satisfy:

(i) $b$ is $\mathcal{B}([0, T]\times \mathbb{R}^m \times U \times
V)$-measurable;

(ii) $b(t, x, u, v)$ is bounded for any $(t, x, u, v)$;

(iii) for any $(t, x) \in [0, T]\times \mathbb{R}^m$, $b(t, x,
\cdot, \cdot)$ is continuous on $U\times V$.

Now for each $u \in \mathcal{U}$, $v \in \mathcal{V}$, let $L^{u,
v}$ be the positive local martingale solution of:
\begin{equation*}
\left\{
\begin{tabular}{rll}
$dL_t^{u, v}$ &=& $L_{t-}^{u, v}(\sigma^{-1}(t, X_{t-})b(t, X_{t-},
u_t,
v_t)dB_t+\kappa^{-1}(t, X_{t-})c(t, X_{t-}, u_t, v_t)dM_t),$\\\\
$L_0^{u, v}$ &=& $1,$
\end{tabular}\right.
\end{equation*} where
for any $(t, x, u, v)$, $i=1, 2, \cdots, m$, the $i$-th component of
$\kappa^{-1}(t, x)c(t, x, u, v)$ is larger than $-1$, i.e.,
$(\kappa^{-1}(t, x)c(t, x, u, v))^i
> -1$.

According to the Girsanov Theorem (see Appendix), $P^{u, v}$ defined
by $\frac{dP^{u, v}}{dP}|\mathcal{G}_T=L_T^{u, v}$ is a probability
measure equivalent to $P$. Moreover, under $P^{u, v}$, the process
$B_t^{u, v}=B_t-\int_0^t \sigma^{-1}(s, X_{s-})b(s, X_{s-}, u_s,
v_s) ds$ is a Brownian motion, the processes $M_t^{i, u,
v}=M_t^i-\int_0^t (\kappa^{-1}(s, X_{s-})c(s, X_{s-}, u_s,
v_s))^i1_{\{\tau_i>s\}}\gamma_s^ids$ $(i=1, 2, \cdots, m)$ are
$\mathbb{G}$-martingales orthogonal to each other and orthogonal to
$B_t^{u, v}$ and $(X_t)_{0\leq t \leq T}$ satisfies
\begin{equation*}
\left\{
\begin{tabular}{rll}
$dX_t$ &=& $(b(t, X_{t-}, u_t, v_t)+c(t, X_{t-}, u_t, v_t)1_{\{\tau>
t\}}\gamma_t) dt$
\\\\
& & $+\sigma(t, X_{t-})dB_t^{u,v}+\kappa(t, X_{t-})dM_t^{u, v},$\\\\
$X_0$ &=& $x_0.$
\end{tabular}\right.
\end{equation*}
It means that $(X_t)_{0\leq t \leq T}$ is a weak solution for the
above stochastic differential equation and it stands for an
evolution of a controlled system.

It is well-known that in zero-sum game problems, there are two
players $J_1$ and $J_2$. We suppose that $J_1$ (resp. $J_2$) chooses
a control $u(t, x)\in U$ (resp. $v(t, x)\in V$). Now we introduce
two functions $f: [0, T]\times \mathbb{R}^m \times U \times V
\rightarrow \mathbb{R}_+$, satisfying the same assumptions as $b$,
and $h: \{0, 1\}\times \mathbb{R}^m \rightarrow \mathbb{R}_+$ which
is measurable, bounded. Let $E^{u, v}$ denote the expectation w.r.t
$P^{u, v}$. Then the cost functional corresponding to $u \in
\mathcal{U}$ and $v \in \mathcal{V}$ is given by
$$J^{u, v}= E_{u, v} [\int_0^T f(s, X_s, u_s, v_s)ds+ h(H_T, X_T)],$$
which is a cost (resp. reward) for $J_1$ (resp. $J_2$).

The object of $J_1$ (resp. $J_2$) is  to minimize (resp. maximize)
the cost functional. In this zero-sum game problem, we aim at
showing the existence of a saddle point, more precisely, a pair
$(\tilde{u}^*, \tilde{v}^*)$ such that $J(\tilde{u}^*, v) \leq
J(\tilde{u}^*, \tilde{v}^*) \leq J(u, \tilde{v}^*)$ for each $(u,
v)\in \mathcal{U}\times \mathcal{V}$.

Thus let us define the Hamilton function associated with this game
problem as following: $\forall (t, x, z, \varsigma, u, v) \in [0,
T]\times \mathbb{R}^m \times \mathbb{R}^{m} \times \mathbb{R}^{m}
\times U \times V, $
\begin{eqnarray*}
H(t, x, z, \varsigma, u, v)&:=&z \sigma^{-1}(t, x)b(t, x, u, v)+
\varsigma \kappa^{-1}(t, x)c(t, x, u, v) 1_{\{\tau> t\}}\gamma_t\\
& & + f(t, x, u, v).
\end{eqnarray*}
Here we should pay special attention to the difference between the
notations of the Hamilton function $H(t, \cdot, \cdot, \cdot, \cdot,
\cdot)$ and the default process $H_t$.

Next assume that Isaacs' condition, which plays an important role in
zero-sum stochastic differential game problems, is fulfilled, i.e,
for any $(t, x, z, \varsigma) \in [0, T]\times \mathbb{R}^m \times
\mathbb{R}^{m} \times \mathbb{R}^{m}$,
$$\inf_{u\in U}\sup_{v\in V}H(t, x, z, \varsigma, u, v)=\sup_{v\in V}\inf_{u\in U}H(t, x, z, \varsigma, u, v).$$

Under the above Isaacs' condition, through the assumptions above and
Benes's selection theorem (see e.g. \cite{B}), the following holds
true (see e.g. \cite{KH}).

\begin{proposition}\label{prop:isaacs}
There exist two measurable functions $u^*(t, x, z, \varsigma)$,
$v^*(t, x, z, \varsigma)$ mapping from $[0, T]\times \mathbb{R}^m
\times \mathbb{R}^{m} \times \mathbb{R}^{m}$ into $U$, $V$
respectively such that:
\item{(i)} the pair $(u^*, v^*)(t, x, z, \varsigma)$ is a saddle
point for the function $H$, i.e,
$$H(t, x, z, \varsigma, u^*(t, x, z,
\varsigma), v^*(t, x, z, \varsigma)) \leq H(t, x, z, \varsigma, u,
v^*(t, x, z, \varsigma)),\ \ \forall\ u\in U,$$
$$H(t, x, z, \varsigma,
u^*(t, x, z, \varsigma), v^*(t, x, z, \varsigma)) \geq H(t, x, z,
\varsigma, u^*(t, x, z, \varsigma), v),\ \ \forall\ v\in V;$$
\item{(ii)} the function $(z, \varsigma)\rightarrow H(t, x, z, \varsigma,
u^*(t, x, z, \varsigma), v^*(t, x, z, \varsigma))$ satisfies (b) and
(c), uniformly in $(t, x)$.
\end{proposition}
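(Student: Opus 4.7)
The plan is to verify both assertions by combining the Isaacs equality with the observation that $H(t,x,z,\varsigma,u,v)$ is affine in $(z,\varsigma)$ with coefficients $\sigma^{-1}(t,x)b(t,x,u,v)$ and $\kappa^{-1}(t,x)c(t,x,u,v)1_{\{\tau>t\}}\gamma_t$, all bounded in $(t,x,u,v)$ under the standing assumptions; in particular $H$ does not depend on $y$, which trivialises the $y$-Lipschitz part of (b).

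For (i), for each fixed $(t,x,z,\varsigma)$ the map $(u,v)\mapsto H(t,x,z,\varsigma,u,v)$ is continuous on the compact space $U\times V$, by continuity of $b$ and $f$ in $(u,v)$, so Isaacs' condition guarantees a nonempty set of saddle points. The set-valued map $(t,x,z,\varsigma)\mapsto \{\text{saddle points of }H\}$ is closed-valued with measurable graph, and an application of Benes's selection theorem yields the required measurable $U$- and $V$-valued selectors $u^*(t,x,z,\varsigma)$ and $v^*(t,x,z,\varsigma)$; the two inequalities in (i) are then exactly the saddle property.

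Set $g(t,x,z,\varsigma):=H(t,x,z,\varsigma,u^*(t,x,z,\varsigma),v^*(t,x,z,\varsigma))$. The main technical tool is the pair of sandwich inequalities
\begin{equation*}
H(t,x,z,\varsigma,u^*(z,\varsigma),v^*(z',\varsigma'))\le g(t,x,z,\varsigma)\le H(t,x,z,\varsigma,u^*(z',\varsigma'),v^*(z,\varsigma)),
\end{equation*}
valid for any test point $(z',\varsigma')$, obtained by plugging $v=v^*(z',\varsigma')$ (resp.\ $u=u^*(z',\varsigma')$) into the saddle inequalities at $(z,\varsigma)$. For the Lipschitz bound (b), I pair the upper sandwich for $g(z,\varsigma)$ against the lower sandwich for $g(z',\varsigma')$ so that the same control pair $(u^*(z',\varsigma'),v^*(z,\varsigma))$ appears on both sides; affinity of $H$ in $(z,\varsigma)$ then reduces the difference to a linear expression in $(z-z',\varsigma-\varsigma')$ with coefficients $\sigma^{-1}b$ and $\kappa^{-1}c\cdot 1_{\{\tau>t\}}\gamma_t$. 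A Cauchy--Schwarz estimate, together with boundedness of $\gamma$ and of $\sigma^{-1}b$ and $\kappa^{-1}c$, turns this into the inequality (b) with $|\varsigma-\bar\varsigma|1_{\{\tau>t\}}\sqrt{\gamma_t}$ on the right-hand side.

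The main obstacle is condition (c): the direction of the bound after dividing by $(\varsigma^i-\bar\varsigma^i)1_{\{\tau_i>t\}}\gamma_t^i$ depends on the sign of this quantity, so I would split on this sign and apply the two sandwiches respectively. Using the saddle inequalities at $\tilde\varsigma^{i-1}$ and at $\tilde\varsigma^i$ with controls cross-substituted, the difference $g(\tilde\varsigma^{i-1})-g(\tilde\varsigma^i)$ is pinched between two quantities of the form $(\varsigma^i-\bar\varsigma^i)\bigl(\kappa^{-1}(t,x)c(t,x,\tilde u,\tilde v)\bigr)^i 1_{\{\tau_i>t\}}\gamma_t^i$ for suitable $(\tilde u,\tilde v)\in U\times V$, because $\tilde\varsigma^{i-1}$ and $\tilde\varsigma^i$ differ only in the $i$th coordinate by $\varsigma^i-\bar\varsigma^i$ and $H$ is affine in $\varsigma$. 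Dividing by $(\varsigma^i-\bar\varsigma^i)1_{\{\tau_i>t\}}\gamma_t^i$ and invoking assumption (iv), which provides $(\kappa^{-1}c)^i>-1$ uniformly in $(t,x,u,v)$, yields the strict lower bound $>-1$ for the ratio in either sign regime, establishing (c).
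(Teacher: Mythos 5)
Your proposal is essentially correct, and it is worth noting that the paper itself gives no proof of this proposition: it simply invokes Benes's selection theorem and refers to El Karoui--Hamad\`ene for the statement. Your part (i) is exactly the route those citations take (continuity of $H$ in $(u,v)$ on the compact set $U\times V$, Isaacs' condition to identify the two values, and a measurable selection of minimizers of $u\mapsto\sup_v H$ and maximizers of $v\mapsto\inf_u H$, whose pairing is automatically a saddle point), and your part (ii) supplies the standard argument the paper leaves implicit: the cross-substituted sandwich inequalities, combined with affinity of $H$ in $(z,\varsigma)$, pin the increment of the value function between two increments of $H$ evaluated at a \emph{common} control pair, which gives (b) from boundedness of $\sigma^{-1}b$, $\kappa^{-1}c$ and $\gamma$, and gives (c) by dividing according to the sign of $(\varsigma^i-\bar\varsigma^i)1_{\{\tau_i>t\}}\gamma_t^i$ and using $(\kappa^{-1}c)^i>-1$. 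Two small caveats: the strict bound $(\kappa^{-1}(t,x)c(t,x,u,v))^i>-1$ is not part of assumption (iv) (which only gives invertibility and bounded inverses of $\sigma,\kappa$) but is the separate hypothesis imposed where $L^{u,v}$ is defined, and your argument, like the paper's setup, implicitly assumes that $c$ is bounded and continuous in $(u,v)$ just as $b$ is (otherwise neither the attainment of the sup/inf nor the Lipschitz bound in $\varsigma$ goes through); also note that with only a pointwise bound $(\kappa^{-1}c)^i>-1$ you obtain (c) pointwise rather than with a uniform gap, which is all the comparison theorem actually requires.
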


Now we introduce two notations just for simplicity:
\begin{eqnarray*}
& &H(t, z, \varsigma) := H(t, X_{t-}, z, \varsigma, u_t, v_t),\\
& &H^*(t, z, \varsigma) := H(t, X_{t-}, z, \varsigma, u^*(t, X_{t-},
z, \varsigma), v^*(t, X_{t-}, z, \varsigma)).
\end{eqnarray*}

Suppose that $J_1$ (resp. $J_2$) has chosen $u\in \mathcal{U}$
(resp. $v\in \mathcal{V}$). The conditional expected remaining cost
from time $t\in [0, T]$ is
$$J_t^{u, v}=E_{u, v}^{\mathcal{G}_t}[\int_t^T f(s, X_s, u_s, v_s)ds+ h(H_T, X_T)].$$
It is obvious that $J_0^{u, v}=J^{u, v}.$ The following theorem
tells us that the conditional costs can be characterized as
solutions of BSDEs with random default time.

\begin{theorem}\label{thm:game and bsde}
The BSDE with random default time
\begin{equation}\label{gamebsde}
Y_t=h(H_T, X_T)+\int_t^T H(s, Z_s, \zeta_s)ds- \int_t^T
Z_sdB_s-\int_t^T \zeta_sdM_s
\end{equation}
has a unique solution $(Y, Z, \zeta)\in S_\mathcal{G}^2(0, T;
\mathbb{R}) \times L_\mathcal{G}^2(0, T; \mathbb{R}^{m}) \times
L_\mathcal{G}^{2, \tau}(0, T; \mathbb{R}^{m})$ which satisfies
$Y_t=J_t^{u, v}$.
\end{theorem}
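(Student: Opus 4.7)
The proof plan splits naturally into two stages: first, establishing that equation (\ref{gamebsde}) fits the framework of Theorem \ref{thm:1-main}, and second, identifying the solution $Y$ with the conditional cost $J^{u,v}_t$ via a change of measure.

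For the first stage, I would view the generator as $g(s,y,z,\varsigma) := H(s,z,\varsigma)$, noting that it does not depend on $y$. Then I would verify assumptions (a) and (b) from Section \ref{sec:bsde}. Since $\sigma^{-1}$, $\kappa^{-1}$, $b$, $c$, $f$ are all bounded and $\gamma$ is bounded, the map $g(\cdot,0,0)=f(\cdot,X_\cdot,u_\cdot,v_\cdot)$ lies in $L^2_{\mathcal{G}}(0,T;\mathbb{R})$, giving (a). The Lipschitz estimate in $z$ follows from the boundedness of $\sigma^{-1}b$. For $\varsigma$, one gets $|H(s,z,\varsigma)-H(s,z,\bar\varsigma)|\leq C|\varsigma-\bar\varsigma|\,1_{\{\tau>s\}}\gamma_s$, and the boundedness of $\gamma$ lets us bound $\gamma_s$ by $\sqrt{\|\gamma\|_\infty}\sqrt{\gamma_s}$, recovering the Lipschitz condition (b) in the required form. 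Theorem \ref{thm:1-main} then delivers a unique triple $(Y,Z,\zeta)$ in the proper spaces.

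For the second stage, I would rewrite the BSDE under the measure $P^{u,v}$. Substituting the Girsanov decompositions $dB_s = dB^{u,v}_s+\sigma^{-1}(s,X_{s-})b(s,X_{s-},u_s,v_s)\,ds$ and $dM_s = dM^{u,v}_s+\kappa^{-1}(s,X_{s-})c(s,X_{s-},u_s,v_s)1_{\{\tau>s\}}\gamma_s\,ds$ into (\ref{gamebsde}) makes the $z$- and $\varsigma$-linear pieces of $H$ cancel exactly with the drift corrections, leaving
$$Y_t = h(H_T,X_T)+\int_t^T f(s,X_s,u_s,v_s)\,ds-\int_t^T Z_s\,dB^{u,v}_s-\int_t^T \zeta_s\,dM^{u,v}_s.$$
Taking $E^{\mathcal{G}_t}_{u,v}[\cdot]$ on both sides yields the desired identity $Y_t=J^{u,v}_t$, provided the two stochastic integrals are genuine $(P^{u,v},\mathbb{G})$-martingales, not merely local martingales.

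The main obstacle is precisely this martingale property under the new measure, since a priori $Z\in L^2_\mathcal{G}(P)$ and $\zeta\in L^{2,\tau}_\mathcal{G}(P)$ only. The cleanest way to bypass this is to work under $P$: apply It\^o's product formula to $L^{u,v}_tY_t$, observe by the cancellation above that the finite-variation part equals $-L^{u,v}_tf(t,X_t,u_t,v_t)\,dt$, and conclude that $L^{u,v}_tY_t+\int_0^t L^{u,v}_sf(s,X_s,u_s,v_s)\,ds$ is a local $(P,\mathbb{G})$-martingale. Standard localization combined with the boundedness of $h$, $f$, and of the coefficients appearing in $L^{u,v}$ (so that $L^{u,v}$ has moments of every order) promotes this to a true martingale, and a Bayes-rule computation together with Fubini then gives $L^{u,v}_tY_t=E^{\mathcal{G}_t}[L^{u,v}_Th(H_T,X_T)+\int_t^T L^{u,v}_sf(s,X_s,u_s,v_s)\,ds]$, which is exactly $L^{u,v}_tJ^{u,v}_t$. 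Dividing by $L^{u,v}_t>0$ completes the proof.
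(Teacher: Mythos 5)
Your argument is correct and follows essentially the same route as the paper: read $H(s,z,\varsigma)$ as a generator (independent of $y$) satisfying (a)--(b), invoke Theorem \ref{thm:1-main} for the unique triple, then absorb the $z$- and $\varsigma$-linear parts of $H$ into the Girsanov-shifted drivers $B^{u,v}$, $M^{u,v}$ and take conditional expectation under $P^{u,v}$ to get $Y_t=J_t^{u,v}$. The difference is one of rigor rather than of method, and your extra care is genuinely valuable. The paper's proof does not address existence/uniqueness at all (it is left implicit that the linear generator with bounded $\sigma^{-1}b$, bounded $\kappa^{-1}c$ and bounded $\gamma$ is Lipschitz in the sense of (b)); your reduction $\gamma_s\le\sqrt{\|\gamma\|_\infty}\,\sqrt{\gamma_s}$ is exactly the step needed to put the $\varsigma$-term into the form required by (b). More importantly, the paper passes from the rewritten equation to $Y_t=E_{u,v}^{\mathcal{G}_t}[\,\cdot\,]$ with the words ``we can easily obtain'', tacitly assuming that $\int Z\,dB^{u,v}$ and $\int\zeta\,dM^{u,v}$ are true $(P^{u,v},\mathbb{G})$-martingales although $(Z,\zeta)$ is only square integrable under $P$. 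You identify precisely this gap and close it by working under $P$: It\^{o} on $L^{u,v}Y$, the cancellation leaving drift $-L^{u,v}f\,dt$, a localization dominated by $\sup_t L_t^{u,v}\sup_t|Y_t|+\int_0^T L_s^{u,v}|f|\,ds\in L^1(P)$ (Doob's inequality for $L^{u,v}$, which has all moments since its coefficients and $\gamma$ are bounded, together with $Y\in S_\mathcal{G}^2$), and finally Bayes' rule. This is the standard rigorous substitute for the paper's shortcut and yields the same identity; the only implicit assumption you share with the paper is the boundedness of $c$, which the text never states explicitly but clearly intends.
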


\begin{proof}
Notice that
\begin{eqnarray*}
H(t, z, \varsigma)&=&H(t, X_{t-}, z, \varsigma, u_t,
v_t)\\
&=&z \sigma^{-1}(t, X_{t-})b(t, X_{t-}, u_t, v_t)+
\varsigma\kappa^{-1}(t, X_{t-})c(t, X_{t-}, u_t, v_t) 1_{\{\tau>
t\}}\gamma_t\\
& &+ f(t, X_{t-}, u_t, v_t).
\end{eqnarray*}
Then (\ref{gamebsde}) can be transformed to the following:
\begin{eqnarray*}
Y_t&=&h(H_T, X_T)+\int_t^T f(s, X_s, u_s, v_s)ds\\
&& -\int_t^T Z_s d(B_s-\int_0^s \sigma^{-1}(r, X_r)b(r, X_r, u_r, v_r)dr)\\
&& -\int_t^T \zeta_s d(M_s-\int_0^s \kappa^{-1}(r, X_{r-})c(r,
X_{r-}, u_r, v_r)1_{\{\tau> r\}}\gamma_rdr).
\end{eqnarray*}
According to the Girsanov Theorem, we can easily obtain
$$Y_t=E_{u, v}^{\mathcal{G}_t}[\int_t^T f(s, X_s, u_s, v_s)ds+ h(H_T, X_T)],$$
i.e., $Y_t=J_t^{u, v}.$
\end{proof}
\\
\par Next is the main result of this part.

\begin{theorem}\label{thm:game and bsde 2}
The BSDE with random default time
\begin{equation}\label{gamebsde2}
Y_t=h(H_T, X_T)+\int_t^T H^*(s, Z_s, \zeta_s)ds- \int_t^T
Z_sdB_s-\int_t^T \zeta_sdM_s.
\end{equation}
has a unique solution $(Y, Z, \zeta)\in S_\mathcal{G}^2(0, T;
\mathbb{R}) \times L_\mathcal{G}^2(0, T; \mathbb{R}^{m}) \times
L_\mathcal{G}^{2, \tau}(0, T; \mathbb{R}^{m})$ which satisfies
$$Y_t=J_t^{\tilde{u}^*, \tilde{v}^*},$$
where $\tilde{u}^*(t, X_{t-})=u^*(t, X_{t-}, Z_t, \zeta_t),$
$\tilde{v}^*(t, X_{t-})=v^*(t, X_{t-}, Z_t, \zeta_t).$ Moreover, the
pair $(\tilde{u}^*, \tilde{v}^*)$ is a saddle point for the game.
\end{theorem}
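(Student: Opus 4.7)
The result has three components to verify: existence and uniqueness of the solution of (\ref{gamebsde2}); the identification $Y_t=J_t^{\tilde u^*,\tilde v^*}$; and the saddle-point inequalities. I would attack them in that order.

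First, I would check the hypotheses of Theorem \ref{thm:1-main} for the generator $g(s,z,\varsigma):=H^*(s,z,\varsigma)$. Proposition \ref{prop:isaacs}(ii) supplies the Lipschitz continuity in $(z,\varsigma)$ (condition (b)) uniformly in $(s,X_{s-})$, and the boundedness hypotheses on $b,c,f,\sigma^{-1},\kappa^{-1}$ and $\gamma$ force $H^*(\cdot,0,0)$ to be uniformly bounded and thus to lie in $L^2_{\mathcal G}$ (condition (a)). Theorem \ref{thm:1-main} then yields the unique triple $(Y,Z,\zeta)$ in the stated spaces.

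Second, the selectors $\tilde u^*_s=u^*(s,X_{s-},Z_s,\zeta_s)$ and $\tilde v^*_s=v^*(s,X_{s-},Z_s,\zeta_s)$ are progressively measurable and take values in the compact sets $U,V$, so $\tilde u^*\in\mathcal U$ and $\tilde v^*\in\mathcal V$. By the very construction of $H^*$ from the selectors,
$$H^*(s,Z_s,\zeta_s)=H(s,X_{s-},Z_s,\zeta_s,\tilde u^*_s,\tilde v^*_s)\quad\text{a.e., a.s.,}$$
so $(Y,Z,\zeta)$ also satisfies the BSDE (\ref{gamebsde}) driven by $(\tilde u^*,\tilde v^*)$. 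Theorem \ref{thm:game and bsde} then identifies $Y_t=J_t^{\tilde u^*,\tilde v^*}$.

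Third, for the saddle-point property, I would fix $v\in\mathcal V$ and let $(\bar Y^v,\bar Z^v,\bar\zeta^v)$ be the solution of the linear BSDE with generator $g^v(s,z,\varsigma):=H(s,X_{s-},z,\varsigma,\tilde u^*_s,v_s)$; by Theorem \ref{thm:game and bsde}, $\bar Y^v_t=J_t^{\tilde u^*,v}$. Proposition \ref{prop:isaacs}(i) provides the pointwise saddle inequality, and together with condition (c) delivered by Proposition \ref{prop:isaacs}(ii), this allows Theorem \ref{thm:comparison} to be invoked to conclude $Y_t\ge\bar Y^v_t=J_t^{\tilde u^*,v}$. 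The symmetric argument with $\tilde v^*$ fixed and $u\in\mathcal U$ arbitrary yields $Y_t\le J_t^{u,\tilde v^*}$. Combining with the second step, $(\tilde u^*,\tilde v^*)$ is a saddle point.

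The main obstacle I foresee is reconciling the pointwise-in-$(z,\varsigma)$ saddle inequality of Proposition \ref{prop:isaacs}(i), in which the minimizer $u^*(s,X_{s-},z,\varsigma)$ still varies with its arguments, with the linear comparator generator $g^v$, whose $U$-component is frozen at $\tilde u^*_s$. The nonlinear comparison theorem, Theorem \ref{thm:comparison}, is invoked precisely to absorb this mismatch; the nontrivial role of condition (c) in the comparison theorem, already emphasized in Remark \ref{remark:g} and Example \ref{counterexample}, is exactly what keeps the argument from collapsing.
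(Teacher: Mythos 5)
Your first two steps (existence/uniqueness via Proposition \ref{prop:isaacs}(ii) and Theorem \ref{thm:1-main}, and the identification $Y_t=J_t^{\tilde u^*,\tilde v^*}$ by observing that $(Y,Z,\zeta)$ solves (\ref{gamebsde}) for the pair $(\tilde u^*,\tilde v^*)$) are correct and coincide with the paper's argument. The gap is in the saddle-point step, and it is exactly the obstacle you flag at the end but then dismiss: the mismatch is \emph{not} absorbed by Theorem \ref{thm:comparison} automatically. In the configuration your text suggests --- taking the nonlinear generator in the comparison to be $H^*$ (this is the reading forced by your appeal to condition (c) ``delivered by Proposition \ref{prop:isaacs}(ii)'', which concerns $H^*$ only) and the $g^v$-equation as the comparator --- the hypothesis of Theorem \ref{thm:comparison} that must be verified is the inequality evaluated along the \emph{comparator's} solution, namely $H^*(s,\bar Z^v_s,\bar\zeta^v_s)\geq H(s,X_{s-},\bar Z^v_s,\bar\zeta^v_s,\tilde u^*_s,v_s)$. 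Proposition \ref{prop:isaacs}(i) does not give this, because $\tilde u^*_s=u^*(s,X_{s-},Z_s,\zeta_s)$ is the selector frozen at the solution of (\ref{gamebsde2}) and need not be a minimizer at $(\bar Z^v_s,\bar\zeta^v_s)$; since $H^*$ is only an inf--sup, the inequality can genuinely fail. So as written the comparison step does not go through.

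The paper's resolution is to reverse the roles: treat (\ref{gamebsde2}) as the equation of type (\ref{4.2}) with the \emph{frozen driver} $\overline g_s:=H^*(s,Z_s,\zeta_s)$, and the controlled equation (generator $H(s,X_{s-},z,\varsigma,u_s,\tilde v^*_s)$, resp. $H(s,X_{s-},z,\varsigma,\tilde u^*_s,v_s)$) as the nonlinear equation of type (\ref{4.1}). Then the comparison hypothesis is tested at $(Y_s,Z_s,\zeta_s)$, i.e. precisely where $\tilde u^*_s,\tilde v^*_s$ \emph{are} the pointwise saddle selectors, so Proposition \ref{prop:isaacs}(i) yields $H(s,X_{s-},Z_s,\zeta_s,u_s,\tilde v^*_s)\geq H^*(s,Z_s,\zeta_s)$ and hence $J_t^{u,\tilde v^*}\geq J_t^{\tilde u^*,\tilde v^*}$; the other half, $J_t^{\tilde u^*,v}\leq J_t^{\tilde u^*,\tilde v^*}$, uses the symmetric ($\leq$) form of the comparison theorem, which follows from the same linearization proof. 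In this correct configuration, conditions (a)--(c) must be checked for the controlled generators $H(s,X_{s-},\cdot,\cdot,u_s,\tilde v^*_s)$ and $H(s,X_{s-},\cdot,\cdot,\tilde u^*_s,v_s)$ (they hold because $\sigma^{-1}b$ and $|\kappa^{-1}c|\sqrt{\gamma}$ are bounded and $(\kappa^{-1}c)^i>-1$), not read off Proposition \ref{prop:isaacs}(ii). With the comparison set up this way your argument closes; without it, the crucial inequality is unproved.
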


\begin{proof}
By Proposition \ref{prop:isaacs} (ii) and Theorem \ref{thm:1-main},
we can easily get the existence and uniqueness of the solution $(Y,
Z, \zeta)$ to (\ref{gamebsde2}). Similarly as in Theorem
\ref{thm:game and bsde}, we have $Y_t=J_t^{\tilde{u}^*,
\tilde{v}^*}$.

Next we prove that $(\tilde{u}^*, \tilde{v}^*)$ is a saddle point
for the game. The main tool we use is the comparison theorem for
BSDEs with random default time.

First let us consider the following equation:
\begin{equation}\label{gamebsde3}
Y_t=h(H_T, X_T)+\int_t^T H(s, X_{s-}, Z_s, \zeta_s, u_s,
\tilde{v}^*)ds-\int_t^T Z_sdB_s-\int_t^T \zeta_sdM_s.
\end{equation}

By Proposition \ref{prop:isaacs} (ii), (\ref{gamebsde3}) has a
unique solution $(Y^u, Z^u, \zeta^u)$ with $Y_t^u=J_t^{u,
\tilde{v}^*}$. By Proposition \ref{prop:isaacs} (i), we have $H(t,
X_{t-}, Z_t, \zeta_t, u_t, \tilde{v}^*) \geq H^*(t, Z_t, \zeta_t)$
for each $u\in \mathcal{U}$. It then follows from the comparison
theorem that for each $t\in [0, T]$, $J_t^{u, \tilde{v}^*}\geq
J_t^{\tilde{u}^*, \tilde{v}^*}$a.s., for each $u\in \mathcal{U}$.
Then $J^{u, \tilde{v}^*}\geq J^{\tilde{u}^*, \tilde{v}^*}$ for each
$u\in \mathcal{U}$.

In a symmetric way, considering
\begin{equation*}
Y_t=h(H_T, X_T)+\int_t^T H(s, X_{s-}, Z_s, \zeta_s, \tilde{u}^*,
v_s)ds- \int_t^T Z_sdB_s- \int_t^T \zeta_sdM_s,
\end{equation*}
we obtain $J_t^{\tilde{u}^*, v}\leq J_t^{\tilde{u}^*, \tilde{v}^*}$
and $J^{\tilde{u}^*, v}\leq J^{\tilde{u}^*, \tilde{v}^*}$ for each
$v\in \mathcal{V}$.

Therefore $J^{\tilde{u}^*, v}\leq J^{\tilde{u}^*, \tilde{v}^*}\leq
J^{u, \tilde{v}^*}$ for each $u\in \mathcal{U}$ and $v\in
\mathcal{V}$, i.e, the pair $(\tilde{u}^*, \tilde{v}^*)$ is a saddle
point for the game.
\end{proof}

\begin{remark}\label{gameutility}
From another point of view, the game problem is just a utility
maximization problem under model uncertainty, once $\mathcal{U}$ is
regarded as the set that leads to model uncertainty and
$\mathcal{V}$ the set of admissible trading strategies for the
investor.
\end{remark}

\begin{remark}\label{gamecontrol}
The results can be applied to the control problem (as well as the
utility maximization problem)
%, see e.g. Bielecki et al.\cite{BJR1} or Lim-Quenez \cite{LQ}
of the existence of an optimal strategy where the diffusions are
bounded. For this, we can choose $f$, $b$ and $c$ independent of
$u$. Then we get that $\tilde{v}^*(t, X_{t-})=\tilde{v}^*(t, X_{t-},
Z_t, \zeta_t)$ is an optimal strategy for the optimal stochastic
control problem, where $\tilde{v}^*(t, x, z)$ maximizes
$$H(t, x, z, \varsigma, v)=z \sigma^{-1}(t, x)b(t, x, v)+ \varsigma
\kappa^{-1}(t, x)c(t, X_{t-}, v)1_{\{\tau> t\}}\gamma_t+ f(t, x,
v),$$ and where $(Z_t, \zeta_t)$ is such that $(Y_t, Z_t, \zeta_t)$
is the unique solution of
$$Y_t=h(H_T, X_T)+\int_t^T H(s, X_{s-}, Z_s, \tilde{v}_s^*)ds-\int_t^T
Z_sdB_s-\int_t^T \zeta_s dM_s.$$ It should be noted that in the
utility maximization problem, we always choose $f=0$ and call $h$
the utility function. Besides, it is obvious that the value of the
utility maximization problem just equals to $Y_0$.
\end{remark}

\section*{Acknowledgements}

The first author thanks Prof. Monique Jeanblanc for very helpful
discussions during her visit in Shandong University. The second
author is grateful to Shuai Jing for his careful reading and useful
suggestions.

%% The Appendices part is started with the command \appendix;
%% appendix sections are then done as normal sections
%% \appendix

%% \section{}
%% \label{}

\appendix\section*
{Appendix: Some basic results}

\renewcommand{\theequation}{A.\arabic{theorem}}
\setcounter{section}{1}\setcounter{theorem}{0}
\renewcommand{\theequation}{A.\arabic{remark}}
\setcounter{section}{1}\setcounter{remark}{0}
\renewcommand{\theequation}{A.\arabic{equation}}
\setcounter{section}{1}\setcounter{equation}{0}

Let us recall some basic and essential results for this paper. Note
that all are in a defaultable setting.

\begin{theorem}\label{thm:general ito formula} (It\^{o}'s
formula). Let $X_t$ be an m-dimensional It\^{o} jump-diffusion
process given by
$$dX_t=b_tdt+\sigma_t dB_t+ \kappa_t dM_t,$$
where $B_t$ is a $d$-dimensional Brownian motion, $M_t$ is a
$k$-dimensional martingale (i.e., there are $k$ default times
$\tau_1$, $\tau_2$, $\cdots$, $\tau_k$),  $b_t$, $\sigma_t$ and
$\kappa_t$ are $\mathbb{G}$-adapted processes with corresponding
dimensions satisfying
$$E\int_0^T |b_t|dt< + \infty,\ E\int_0^T |\sigma_t|^2dt< + \infty,\ E\int_0^T |\kappa_t|^2 1_{\{\tau>t\}}\gamma_tdt < + \infty.$$ Let
$f(t, x)\in C^{1, 2}([0, T]\times \mathbb{R}^m; \mathbb{R})$.
\\
Then the process
$$Y_t:=f(t, X_t)$$
is again an It\^{o} jump-diffusion process, and it can be given by
\begin{equation}\label{multi- dim ito formula differential}
\begin{tabular}{rll}
$dY_t$ &=& $\frac{\partial f}{\partial t}(t, X_t)dt+\sum_{i=1}^m
\frac{\partial f}{\partial x_i}(t,
X_t)dX_t^i$\\\\
&& $+\frac{1}{2}\sum_{i, j=1}^m\sum_{k=1}^d \frac{\partial^2
f}{\partial x_i \partial x_j}(t, X_t)\sigma_{ik} \sigma_{jk}dt$
\\\\
&&  $+\sum_{j=1}^k [\Delta_{j} f(t,
X_{t-})-\sum_{i=1}^m\frac{\partial f}{\partial x_i}(t,
X_{t-})\kappa_{ij, t}]dH_t^j,$
\end{tabular}
\end{equation}
where
\begin{equation*}
\Delta_{j} f(t, X_{t-}) := f(t, X_{t-}^1+\kappa_{1j, t}, \cdots,
X_{t-}^i+\kappa_{ij, t}, \cdots, X_{t-}^m+\kappa_{mj, t})-f(t,
X_{t-}).
\end{equation*}
\end{theorem}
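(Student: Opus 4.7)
The plan is to apply the classical Itô formula for cadlag semimartingales to $f(t, X_t)$ and then simplify the result using the specific jump structure of the default martingales $M^j$. First I would observe that $X$ is a semimartingale: writing $dM_t^j = dH_t^j - 1_{\{\tau_j > t\}}\gamma_t^j dt$, the process $X$ decomposes into an absolutely continuous drift, a continuous Brownian martingale $\int_0^\cdot \sigma_s dB_s$, and the finite-variation jump term $\sum_j \int_0^\cdot \kappa_{\cdot j,s}\, dH_s^j$ (compensated). Since $P(\tau_i = \tau_j) = 0$ for $i \neq j$, on $[0,T]$ the process $X$ has at most $k$ jumps, occurring exactly at the default times $\tau_j$, and is otherwise continuous.

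Next I would invoke the general Itô formula for semimartingales, which applied to $f(t, X_t)$ yields
$$f(t, X_t) - f(0, X_0) = \int_0^t \tfrac{\partial f}{\partial s}(s, X_s)ds + \sum_{i=1}^m \int_0^t \tfrac{\partial f}{\partial x_i}(s, X_{s-})dX_s^i + \tfrac{1}{2}\sum_{i,j=1}^m \int_0^t \tfrac{\partial^2 f}{\partial x_i\partial x_j}(s, X_s)\, d\langle X^{c,i}, X^{c,j}\rangle_s + \sum_{0 < s \leq t}\Bigl[f(s, X_s) - f(s, X_{s-}) - \sum_{i=1}^m \tfrac{\partial f}{\partial x_i}(s, X_{s-})\Delta X_s^i\Bigr],$$
where $X^c$ denotes the continuous martingale part of $X$. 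The continuous covariation comes entirely from the Brownian integral, giving $d\langle X^{c,i}, X^{c,j}\rangle_s = \sum_{k=1}^d \sigma_{ik,s}\sigma_{jk,s}\,ds$, which yields the second-order Brownian term in the statement.

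Then I would analyze the jump sum. At $s = \tau_j$ we have $\Delta H_s^j = 1$, $\Delta H_s^i = 0$ for $i\neq j$, and $\Delta X_s = \kappa_{\cdot j,s}$, the $j$-th column of $\kappa_s$. Hence $f(s, X_s) - f(s, X_{s-}) = \Delta_j f(s, X_{s-})$ and $\sum_i \tfrac{\partial f}{\partial x_i}(s, X_{s-})\Delta X_s^i = \sum_i \tfrac{\partial f}{\partial x_i}(s, X_{s-})\kappa_{ij,s}$, so the jump sum rewrites pathwise as the Stieltjes integral $\sum_{j=1}^k \int_0^t [\Delta_j f(s, X_{s-}) - \sum_i \tfrac{\partial f}{\partial x_i}(s, X_{s-})\kappa_{ij,s}]dH_s^j$, matching the last line of the stated formula. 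Differentiating (in the semimartingale sense) gives the form (\ref{multi- dim ito formula differential}).

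The main obstacle is purely technical: confirming that all the stochastic integrals appearing in the identity are well-defined semimartingale integrals. This follows from the hypotheses $E\int_0^T |b_t|dt$, $E\int_0^T|\sigma_t|^2 dt$, $E\int_0^T|\kappa_t|^2 1_{\{\tau > t\}}\gamma_t dt < \infty$ together with the compensator relation $dH_s^j = dM_s^j + 1_{\{\tau_j > s\}}\gamma_s^j ds$, which controls the $dH^j$ integrals through the martingale $M^j$ and the absolutely continuous compensator. Since the stated formula keeps the jump correction in $dH^j$ form, no further splitting into $dM^j$-integrals is required.
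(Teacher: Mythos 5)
Your proposal is correct, and it reaches the result by a genuinely different route than the paper. You treat $X$ as a c\`adl\`ag semimartingale and invoke the general It\^{o} formula for semimartingales (Protter, Dellacherie--Meyer, He--Wang--Yan) as the key black box, then specialize: the continuous covariation comes only from $\int \sigma\, dB$, and the jump sum, supported on the finitely many a.s. distinct default times, is rewritten pathwise as Stieltjes integrals against $dH^j$. The paper instead gives an elementary, self-contained sketch, and only for $m=d=k=1$: away from $\tau$ the process is a continuous It\^{o} process with drift $b_t-\kappa_t 1_{\{\tau>t\}}\gamma_t$, so the Brownian It\^{o} formula applies separately on $[0,\tau\wedge T)$ and $(\tau\wedge T, T]$, and the single jump contribution $f(\tau,X_\tau)-f(\tau,X_{\tau-})$ is added by hand before repackaging everything in terms of $dX_s$ and $dH_s$. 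Your route buys the full multidimensional, multi-default statement with no extra work and rests on a standard citable theorem (precisely the references the paper points to); the paper's route is more elementary and exhibits where each term comes from, but is only a one-dimensional, one-default sketch. Two small remarks. First, your derivation correctly places the predictable value $X_{s-}$ in the first-order integrand $\partial_{x_i}f(s,X_{s-})\,dX_s^i$, which is what makes it consistent with the jump correction written at $X_{t-}$; the paper's display, which puts $X_t$ in the first-order term but $X_{t-}$ in the correction, would pick up a spurious term $\sum_i\bigl[\partial_{x_i}f(\tau_j,X_{\tau_j})-\partial_{x_i}f(\tau_j,X_{\tau_j-})\bigr]\kappa_{ij,\tau_j}$ at each jump, and should be read with $X_{t-}$ (or $X_t$) in both places, as in your version. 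Second, well-definedness of the stochastic integrals follows already from the local boundedness of the left-continuous integrands; the moment hypotheses in the statement are what make $X$ itself well-defined and the integrals genuinely (square-)integrable rather than merely local, so your appeal to them at the end, while harmless, is not where the issue lies.
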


The main idea of the proof can be referred to \cite{Meyer},
\cite{YHW} or \cite{Protter}. Here we only give a sketch of proof
for reader's convenience.
\\
{\bf{Sketch of proof.}} For the sake of simplicity, we only give the
proof for the case when $m=d=k=1$. We know that the jump-diffusion
process $X$ jumps only at $\tau$ with the jump size $\kappa_\tau$,
thus on [0, $\tau \wedge T$) and ($\tau \wedge T$, T],
$$dX_t=dX_t^c=b_tdt+\sigma_t dB_t - \kappa_t 1_{\{\tau> t\}}\gamma_tdt.$$
Applying the It\^{o}'s formula in the Brownian case, we obtain
\begin{equation*}
Y_t-Y_0 = \int_0^t [(\frac{\partial f}{\partial s}(s,
X_s)+\frac{1}{2}\sigma_s^2\frac{\partial^2 f}{\partial x^2}(s,
X_s)]ds+\int_0^t \frac{\partial f}{\partial x}(s, X_s)dX_s^c,
\end{equation*}
since $dX_s=dX_s^c$ on $[0, t] \subset [0, \tau \wedge T);$
\begin{equation*}
Y_t-Y_r = \int_r^t [(\frac{\partial f}{\partial s}(s,
X_s)+\frac{1}{2}\sigma_s^2\frac{\partial^2 f}{\partial x^2}(s,
X_s)]ds+ \int_r^t \frac{\partial f}{\partial x}(s, X_s)dX_s^c,
\end{equation*}
since $dX_s=dX_s^c$ on $[r, t]\subset (\tau \wedge T, T]$.

If the default event occurs at the default time $\tau$ with jump
size $\kappa_\tau$, then the resulting change in $Y_t$ is given by
$f(\tau, X_\tau)-f(\tau, X_{\tau-})=f(\tau,
X_{\tau-}+\delta_\tau)-f(\tau, X_{\tau-})$.

Thus the total change in $Y_t$ can be written as the sum of these
two contributions:
\begin{equation*}
\begin{tabular}{rll}
$Y_t-Y_0$ &=& $\int_0^t [(\frac{\partial f}{\partial s}(s,
X_s)+\frac{1}{2}\sigma_s^2\frac{\partial^2 f}{\partial x^2}(s,
X_s))ds+\frac{\partial f}{\partial x}(s, X_s)dX_s^c]$ \\\\
& & $+1_{\{\tau \leq t\}}[f(\tau, X_\tau)-f(\tau, X_{\tau-})]$
\\\\
&=& $\int_0^t [\frac{\partial f}{\partial s}(s,
X_s)+(b_s-\kappa_s1_{\{\tau> s\}}\gamma_s)\frac{\partial f}{\partial
x}(s, X_s)+\frac{1}{2}\sigma_s^2 \frac{\partial^2 f}{\partial
x^2}(s, X_s)]ds$
\\\\
& & $+\int_0^t \frac{\partial f}{\partial x}(s, X_{s})\sigma_s
dB_s+\int_0^t [f(s, X_s)-f(s, X_{s-})]dH_s.$
\end{tabular}
\end{equation*}
%&=& $\int_0^t [\frac{\partial f}{\partial s}(s,
%X_s)+(b_s-\kappa_s1_{\{\tau> s\}}\gamma_s)\frac{\partial f}{\partial
%x}(s, X_s)+\frac{1}{2}\sigma_s^2
%\frac{\partial^2 f}{\partial x^2}(s, X_s)$\\\\
%& & $+(f(s, X_s)-f(s, X_{s-}))1_{\{\tau> s\}}\gamma_s)]ds$\\\\
%& & $+\int_0^t \frac{\partial f}{\partial x}(s,
%X_{s})\sigma_sdB_s+\int_0^t [f(s, X_s)-f(s, X_{s-})]dM_s$\\\\
%&=& $\int_0^t [\frac{\partial f}{\partial s}(s,
%X_s)+(b_s-\kappa_s1_{\{\tau> s\}}\gamma_s)\frac{\partial f}{\partial
%x}(s, X_s)+\frac{1}{2}\sigma_s^2 \frac{\partial^2 f}{\partial
%x^2}(s, X_s)]ds$\\\\
%& & $+\int_0^t \frac{\partial f}{\partial x}(s,
%X_{s})\sigma_sdB_s+\int_0^t [f(s, X_s)-f(s, X_{s-})]dM_s$\\\\

Note that for any Borel measurable function $g$, we have $\int_0^t
g(s, X_{s-})ds=\int_0^t g(s, X_s)ds$ since $X_{s-}$ and $X_s$ differ
only for at most one value of $s$ (for each $\omega\in \Omega$). By
simple computation, we can get
\begin{equation*}
\begin{tabular}{rll}
$Y_t-Y_0$ &=& $\int_0^t \frac{\partial f}{\partial s}(s,
X_s)ds+\int_0^t \frac{\partial f}{\partial x}(s, X_s)dX_s+\int_0^t
\frac{1}{2}\sigma_s^2 \frac{\partial^2 f}{\partial
x^2}(s, X_s)ds$\\\\
& & $+\int_0^t [(f(s, X_s)-f(s, X_{s-}))-\kappa_s \frac{\partial
f}{\partial x}(s, X_{s-})]dH_s.\ \ \ \ \ \ \ \ \ \Box$
\end{tabular}
\end{equation*}

\begin{example}\label{eg:integration by parts}
As an application of It\^{o}'s formula, we compute $f(t,
X_t)=e^{\beta t}X_t^2.$ Obviously $f(t, x)=e^{\beta t} x^2 \in C^{1,
2}([0, T]\times \mathbb{R}; \mathbb{R})$, and we have
$$\frac{\partial f}{\partial t}(t, x)=\beta e^{\beta t} x^2,\ \
\frac{\partial f}{\partial x}(t, x)=2e^{\beta t} x,\ \
\frac{\partial^2 f}{\partial x^2}(t, x)=2e^{\beta t}.$$ Hence due to
It\^{o}'s formula (\ref{multi- dim ito formula differential}), we
obtain
\begin{equation}\label{eg: int. by parts 1}
\begin{tabular}{rll}
$d f(t, X_t)$ &=& $\beta e^{\beta t} X_t^2dt+2 e^{\beta t}
X_tdX_t+\frac{1}{2}\sigma_t^2 \cdot 2 e^{\beta t}dt$
\\\\
& & $+ [(e^{\beta t}X_t^2-e^{\beta t}X_{t-}^2)-\kappa_t \cdot 2
e^{\beta t} X_{t-}]dH_t$
\\\\
&=& $\beta e^{\beta t} X_t^2dt+2 e^{\beta t} X_tdX_t+\sigma_t^2
\cdot e^{\beta t}dt$
\\\\
& & $+ [(e^{\beta t}(X_{t-}+\kappa_t)^2-e^{\beta
t}X_{t-}^2)-\kappa_t \cdot 2 e^{\beta t} X_{t-}]dH_t$
\\\\
&=& $e^{\beta t} (\beta X_t^2+\sigma_t^2+\kappa_t^2 1_{\{\tau
>t\}} \gamma_t)dt+2 e^{\beta t} X_tdX_t+ e^{\beta t} \kappa_t^2
dM_t.$
\end{tabular}
\end{equation}
If we use the formula of integration by parts, first compute
$e^{\beta t} X_t$ as follows:
\begin{equation*}
\begin{tabular}{rll}
$de^{\beta t} X_t$ &=& $e^{\beta t}dX_t+X_t de^{\beta t}+d[e^{\beta
t},
X_t]$\\\\
&=& $e^{\beta t}(b_t+\beta X_t)dt+e^{\beta t}\sigma_tdB_t+e^{\beta
t}\kappa_tdM_t,$
\end{tabular}
\end{equation*}
secondly compute $f(t, X_t)=e^{\beta t}X_t^2=e^{\beta t}X_t \cdot
X_t$:
\begin{equation}\label{eg: int. by parts 2}
\begin{tabular}{rll}
$d f(t, X_t)$ &=& $d (e^{\beta t}X_t \cdot X_t)=e^{\beta t}X_t d
X_t+X_t d (e^{\beta t}X_t)+d[e^{\beta t}X_t, X_t]$
\\\\
&=& $e^{\beta t}X_t (b_tdt+\sigma_tdB_t+\kappa_tdM_t)$
\\\\
&& $+X_t e^{\beta t}[(b_t+ \beta
X_t)dt+\sigma_tdB_t+\kappa_tdM_t]+e^{\beta t}\sigma_t^2 dt+e^{\beta
t}\kappa_t^2dH_t$
\\\\
&=& $e^{\beta t} (\beta X_t^2+\sigma_t^2+\kappa_t^2 1_{\{\tau
>t\}} \gamma_t)dt+2 e^{\beta t} X_tdX_t+ e^{\beta t} \kappa_t^2
dM_t.$
\end{tabular}
\end{equation}

From the above, we can find that (\ref{eg: int. by parts 1}) and
(\ref{eg: int. by parts 2}) are of the same form. This indicates
that the formula of integration by parts is in fact a special case
of It\^{o}'s formula, which is well-known already in the Brownian
case.

\end{example}

The next theorem is Theorem 2.3 of Kusuoka \cite{K}:

\begin{theorem}\label{thm:maringale representation} (Martingale Representation Theorem).
Assume that both $(\bf{A})$ and $(\bf{H})$ hold. Then any
$\mathbb{G}-$square integrable martingale admits a representation as
the sum of a stochastic integral w.r.t the Brownian motion and
stochastic integrals w.r.t the martingales $\{M^i; i=1, 2, \ldots,
k\}$ associated with $\{\tau_i; i=1, 2, \ldots, k\}$ respectively.

More precisely, suppose $(N_t)_{0\leq t \leq T}$ is a
$\mathbb{G}$-square integrable martingale. Then there exist
$\mathbb{G}$-adapted processes $\mu_s: [0, T]\times \Omega
\rightarrow \mathbb{R}^d$ and $\nu_s^i: [0, T]\times \Omega
\rightarrow \mathbb{R}$ $(i=1, 2, \cdots, k)$ such that
\begin{equation}\label{equation:girsanov1}
E\int_0^T |\mu_s|^2ds < \infty,\ E\int_0^T |\nu_s^i|^2\gamma_s^ids <
\infty,\ i=1,\ 2,\ \cdots,\ k
\end{equation}
and
\begin{equation}\label{equation:girsanov2}
N_t=N_0+ \int_0^t \mu_sdB_s+\int_0^t \nu_sdM_s:=N_0+ \int_0^t
\mu_sdB_s+ \sum_{i=1}^k \int_0^t \nu_s^idM_s^i.
\end{equation}
\end{theorem}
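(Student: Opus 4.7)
The plan is to prove the representation by the Kunita–Watanabe stable-subspace argument, combined with a monotone-class density reduction that exploits hypothesis $(\mathbf{H})$.

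First I would set up the Hilbert space $\mathcal{H}^2_0(\mathbb{G})$ of mean-zero square-integrable $\mathbb{G}$-martingales with inner product $\langle N,N'\rangle = E[N_T N'_T]$ and verify that $B^1,\dots,B^d,M^1,\dots,M^k$ are pairwise strongly orthogonal in $\mathcal{H}^2_0(\mathbb{G})$. By $(\mathbf{H})$, $B$ remains a continuous $\mathbb{G}$-martingale with $\langle B^i,B^j\rangle_t=\delta_{ij}t$; each $M^i$ is purely discontinuous with a unique jump at $\tau_i$, so $\langle B^j,M^i\rangle=0$; and $P(\tau_i=\tau_j)=0$ for $i\ne j$ forces $\langle M^i,M^j\rangle=0$. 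Let $\mathcal{K}\subset\mathcal{H}^2_0(\mathbb{G})$ be the closed stable subspace spanned by stochastic integrals of predictable processes against $B$ and the $M^i$ subject to the integrability condition (A.1). The whole theorem reduces to showing that $\mathcal{K}^\perp=\{0\}$, i.e., that any $N\in\mathcal{H}^2_0(\mathbb{G})$ with $\langle N,B^j\rangle\equiv 0$ and $\langle N,M^i\rangle\equiv 0$ for all $i,j$ is the zero martingale.

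To establish $\mathcal{K}^\perp=\{0\}$, I would exhibit a dense subset of $L^2(\mathcal{G}_T)$ on which the representation can be written down explicitly. Since $\mathcal{G}_T=\mathcal{F}_T\vee\mathcal{H}_T^1\vee\cdots\vee\mathcal{H}_T^k$ is generated by $B$ and the random variables $\tau_i\wedge T$, a monotone class argument shows that variables of product form $\xi=\xi_0\prod_{i=1}^k f_i(\tau_i\wedge T)$, with $\xi_0$ bounded and $\mathcal{F}_T$-measurable and $f_i$ bounded Borel, are total in $L^2(\mathcal{G}_T)$. For each such $\xi$ I would compute $N_t=E[\xi\mid\mathcal{G}_t]$ by the standard enlargement-of-filtration formula, expressing it in terms of $\mathbb{F}$-conditional expectations together with the survival factors $\exp(-\Gamma_t^i)$ on $\{\tau_i>t\}$, where $\Gamma^i_t=\int_0^t\gamma^i_s\,ds$ is the $\mathbb{F}$-hazard process of $\tau_i$. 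Applying the classical Brownian martingale representation theorem to the $\mathbb{F}$-measurable factor (lifted to $\mathbb{G}$ via $(\mathbf{H})$), and Itô's formula of Theorem A.1 to absorb the explicit dependence on the default indicators $H^i$ and the compensators $\Gamma^i$, yields an expression of exactly the form (A.2) with $\mu$ and $\nu$ visible and verifying (A.1).

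The main obstacle is the bookkeeping of the conditional-expectation computation for mixed products and the verification that the resulting integrands lie in the spaces prescribed by (A.1); this is precisely where $(\mathbf{H})$ is essential, because it guarantees both that the $\mathbb{F}$-Brownian representation transfers intact to $\mathbb{G}$ and that $B$ retains its Brownian character in the enlarged filtration. Once the representation holds on the dense family, I extend it to all of $L^2(\mathcal{G}_T)$ by the Itô isometry: $\mu\mapsto\int\mu\,dB$ and $\nu^i\mapsto\int\nu^i\,dM^i$ are isometries from $L^2(dP\otimes ds)$ and $L^2(dP\otimes\gamma^i_s 1_{\{\tau_i>s\}}\,ds)$ respectively into mutually orthogonal closed subspaces of $\mathcal{H}^2_0(\mathbb{G})$, so a Cauchy sequence of representable terminal values produces Cauchy integrands whose limits represent the limit martingale. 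Uniqueness of $(\mu,\nu)$ in the appropriate quotient spaces is then immediate from the pairwise orthogonality established at the outset.
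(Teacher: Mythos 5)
The paper offers no proof of this statement: it is imported verbatim as Theorem 2.3 of Kusuoka \cite{K}, so your attempt cannot ``match'' the paper's argument and must stand on its own. Its architecture --- strong orthogonality of $B, M^1, \dots, M^k$ in $\mathcal{H}^2_0(\mathbb{G})$, reduction to triviality of the orthogonal complement of the stable subspace, density of the product variables $\xi_0\prod_{i=1}^k f_i(\tau_i\wedge T)$ in $L^2(\mathcal{G}_T)$, explicit representation on that dense family, and closure via the It\^{o} isometry --- is a standard and viable route, and every step except one is sound as sketched.

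The gap is the one step that carries all the content: the ``explicit computation'' of $N_t=E[\xi_0\prod_i f_i(\tau_i\wedge T)\mid\mathcal{G}_t]$. The formula you invoke --- $\mathbb{F}$-conditional expectations corrected by the individual survival factors $e^{-\Gamma^i_t}$ on $\{\tau_i>t\}$, with $\Gamma^i_t=\int_0^t\gamma^i_s\,ds$ --- is the key lemma of progressive enlargement by a \emph{single} random time. For $k\ge 2$ no such ``standard formula'' exists: conditioning on $\mathcal{G}_t=\mathcal{F}_t\vee\mathcal{H}^1_t\vee\dots\vee\mathcal{H}^k_t$ requires the \emph{joint} conditional law of $(\tau_1,\dots,\tau_k)$ given $\mathcal{F}_\infty$; for instance on $\bigcap_i\{\tau_i>t\}$ the correction factor is $P(\tau_1>t,\dots,\tau_k>t\mid\mathcal{F}_t)^{-1}$, and on $\{\tau_1\le t<\tau_2\}$ one must condition on $\mathcal{F}_t\vee\sigma(\tau_1)$. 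Hypotheses $(\mathbf{A})$ and $(\mathbf{H})$ prescribe only the $\mathbb{G}$-compensators of the individual $H^i$ together with $P(\tau_i=\tau_j)=0$; writing the conditional expectation as a product of the factors $e^{-\Gamma^i_t}$ silently assumes that the $\tau_i$ are conditionally independent given $\mathcal{F}_\infty$ with conditional survival functions $e^{-\Gamma^i_\cdot}$ --- an assumption that appears nowhere in the theorem. (Even the identification $P(\tau_i>t\mid\mathcal{F}_t)=e^{-\Gamma^i_t}$, i.e.\ of the $\mathbb{G}$-intensity with an $\mathbb{F}$-hazard rate, needs an argument: one must show, via $(\mathbf{H})$ and the total inaccessibility implied by $(\mathbf{A})$, that the Az\'ema supermartingale $G^i_t=P(\tau_i>t\mid\mathcal{F}_t)$ is non-increasing and continuous and solves $dG^i_t=-\gamma^i_tG^i_t\,dt$.) The product structure can in fact be deduced from $(\mathbf{A})$ and $(\mathbf{H})$ --- crucially using that the $\gamma^i$ are required to be $\mathbb{F}$-adapted --- but that deduction is a genuine lemma, arguably the heart of the multi-default case, not the ``bookkeeping'' your sketch defers. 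As written, your argument proves the theorem only for $k=1$; to repair it for general $k$ you must either prove this conditional-independence lemma, or restructure the proof, e.g.\ by induction on $k$ through the intermediate filtrations $\mathbb{F}\vee\mathbb{H}^1\vee\dots\vee\mathbb{H}^i$, which in turn requires showing that $(\mathbf{H})$ and the intensity hypothesis propagate to each intermediate enlargement --- also not automatic.
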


\begin{remark}\label{kusuoka unique}
In fact, in Kusuoka's martingale representation theorem, the
processes $\mu(\cdot)$ and $\nu_i(\cdot)$ $(i=1, 2, \cdots, k)$ are
unique, that is to say, if processes $\tilde{\mu}: [0, T]\times
\Omega \rightarrow \mathbb{R}^d$ and $\tilde{\nu}^i: [0, T]\times
\Omega \rightarrow \mathbb{R}$ $(i=1, 2, \cdots, k)$ also make
(\ref{equation:girsanov1}) and (\ref{equation:girsanov2}) true, then
we undoubtedly have
$$E\int_0^T |\mu_s-\tilde{\mu}_s|^2ds=0,\ E\int_0^T |\nu_s^i-\tilde{\nu}_s^i|^21_{\{\tau_i>s\}}\gamma_s^ids=0,\ i=1, 2, \cdots, k.$$
\end{remark}

The Girsanov Theorem, stated below, can be referred to Kusuoka
\cite{K} (Proposition 3.1) or Bielecki et al. \cite{BJR5}
(Proposition 3.2.2).

\begin{theorem}\label{thm:girsanov} (Girsanov Theorem).
Let $Q$ be a probability measure on $(\Omega, \mathcal{G}_T)$
equivalent to $P$. If the Radon-Nikodym density $\eta.$ of $Q$ w.r.t
$P$ is given as follows:
\begin{equation*}
\eta_t:=\frac{dQ}{dP}|\mathcal{G}_t
=1+\int_0^t\eta_{s-}(\rho_sdB_s+\kappa_sdM_s)=1+\int_0^t\eta_{s-}(\rho_sdB_s+\sum_{i=1}^k
\kappa_s^idM_s^i),
\end{equation*}
where $\kappa^i > -1,\ i=1,\ 2,\ \cdots, k.$ Then the process
$$B_t^\ast=B_t-\int_0^t\rho_sds,\ \forall t\in[0, T]$$
follows a Brownian Motion w.r.t $\mathbb{G}$ under $Q$, and the
processes
$${M_t^{i, \ast}=M_t^i-\int_0^t\kappa_s^i1_{\{\tau^i>s\}}\gamma_s^ids,\ i= 1, 2, \ldots, k}$$
are $\mathbb{G}-$martingales orthogonal to each other and orthogonal
to $B^\ast$.
\end{theorem}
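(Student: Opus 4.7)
The plan is to use the standard martingale characterization under a change of measure, together with Lévy's theorem. Throughout, I denote by $L$ the local martingale part of $\eta$, i.e. $dL_t = \eta_{t-}(\rho_t dB_t + \sum_i \kappa_t^i dM_t^i)$, so $\eta_t = 1 + L_t$.

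First I would invoke the well-known fact that, for an adapted process $X$, $X$ is a $\mathbb{G}$-local martingale under $Q$ if and only if $\eta X$ is a $\mathbb{G}$-local martingale under $P$. Thus the entire theorem reduces to showing that $\eta_\cdot B^*_\cdot$ and $\eta_\cdot M^{i,*}_\cdot$ are $P$-local martingales. Each of these will be checked by Itô's product rule (Theorem \ref{thm:general ito formula}); the essential point in each computation is that a drift term coming from the change of variable cancels exactly against a covariation term coming from the density $\eta$.

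For $B^*$, I would write $dB^*_t = dB_t - \rho_t dt$ and compute
\begin{eqnarray*}
d(\eta_t B^*_t) &=& \eta_{t-} dB^*_t + B^*_{t-} d\eta_t + d[\eta, B^*]_t \\
&=& \eta_{t-} dB_t - \eta_{t-}\rho_t dt + B^*_{t-}\eta_{t-}(\rho_t dB_t + \textstyle\sum_i \kappa_t^i dM_t^i) + \eta_{t-}\rho_t dt,
\end{eqnarray*}
since $B^*$ is continuous so $[\eta, B^*]_t = \int_0^t \eta_{s-}\rho_s ds$ (only the $dB$-piece of $\eta$ contributes). The $\pm \eta_{t-}\rho_t dt$ terms cancel, leaving a $P$-local martingale. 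Hence $B^*$ is a $Q$-local martingale; since $B^*$ is continuous and $[B^*]_t = [B]_t = t$, Lévy's characterization gives that $B^*$ is a $\mathbb{G}$-Brownian motion under $Q$.

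For $M^{i,*}$, I would apply the product rule analogously, noting that $[M^i, M^j] = 0$ for $i \neq j$ (because $P(\tau_i = \tau_j)=0$) and $[M^i, B] = 0$, so the only nontrivial covariation is $[\eta, M^{i,*}]_t = \int_0^t \eta_{s-}\kappa_s^i d[M^i]_s = \int_0^t \eta_{s-}\kappa_s^i dH_s^i$. This yields
\begin{eqnarray*}
d(\eta_t M_t^{i,*}) &=& \eta_{t-} dM_t^i - \eta_{t-}\kappa_t^i 1_{\{\tau_i > t\}}\gamma_t^i dt + M_{t-}^{i,*}dL_t + \eta_{t-}\kappa_t^i dH_t^i \\
&=& \eta_{t-}(1+\kappa_t^i) dM_t^i + M_{t-}^{i,*} dL_t,
\end{eqnarray*}
where the second equality uses $dM_t^i = dH_t^i - 1_{\{\tau_i>t\}}\gamma_t^i dt$. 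This is a $P$-local martingale, so $M^{i,*}$ is a $Q$-local martingale. The condition $\kappa^i > -1$ guarantees $\eta > 0$ so that $Q$ is genuinely equivalent to $P$. Finally, orthogonality is preserved under an equivalent change of measure because the covariation brackets are pathwise invariant: $[B^*, M^{i,*}]_t = [B, M^i]_t = 0$ since $B$ is continuous and $M^i$ is purely discontinuous; and $[M^{i,*}, M^{j,*}]_t = [M^i, M^j]_t = 0$ for $i \neq j$ by the disjointness of jump times.

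The main obstacle will be the bracket computation for $\eta M^{i,*}$: one must correctly identify $[\eta, M^{i,*}]$ as a sum over jumps of $M^i$ (weighted by $\kappa^i$ and $\eta_-$), then rewrite $-\eta_-\kappa^i 1_{\{\tau_i>t\}}\gamma^i dt + \eta_-\kappa^i dH^i = \eta_-\kappa^i dM^i$ so that everything collapses into stochastic integrals against $B$ and the $M^j$'s. A secondary technicality is the standard localization argument (stopping times $T_n \uparrow T$) used to upgrade "local martingale" to "martingale" under appropriate square-integrability of $\rho$ and $\kappa^i$; this is routine and I would only mention it in passing.
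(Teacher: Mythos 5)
The paper itself contains no proof of this theorem: it is quoted in the Appendix as a known result, with the proof deferred to Kusuoka (Proposition 3.1) and Bielecki et al.\ (Proposition 3.2.2). So your proposal is being compared against cited literature rather than an in-paper argument. Your proof is correct and follows the standard route: reduce everything to $P$-local-martingale statements via the Bayes-type lemma ($X$ is a $Q$-local martingale iff $\eta X$ is a $P$-local martingale), cancel the drifts by the product rule --- $[\eta,B^*]_t=\int_0^t\eta_{s-}\rho_s\,ds$ kills $-\eta_{-}\rho\,dt$, while $[\eta,M^{i,*}]_t=\int_0^t\eta_{s-}\kappa_s^i\,dH_s^i$ recombines with the compensator term into $\eta_{-}\kappa^i\,dM^i$, yielding the key identity $d(\eta_t M_t^{i,*})=\eta_{t-}(1+\kappa_t^i)\,dM_t^i+M_{t-}^{i,*}\,dL_t$ --- then invoke L\'evy's characterization for $B^*$ and pathwise invariance of brackets (together with $P(\tau_i=\tau_j)=0$ and continuity of $B$) for the orthogonality claims. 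These computations are sound, and this is essentially the argument one finds in the cited sources; the value of your write-up is that it makes the paper's appendix self-contained.

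One point to tighten: the upgrade from local martingale to true martingale for $M^{i,*}$ is not a matter of ``square-integrability of $\rho$ and $\kappa^i$'' --- no such hypothesis appears in the statement, and none is needed. Use instead that $H^i\leq 1$: for a localizing sequence $T_n$, $E_Q[\int_0^{t\wedge T_n}(1+\kappa_s^i)1_{\{\tau_i>s\}}\gamma_s^i\,ds]=E_Q[H^i_{t\wedge T_n}]\leq 1$, so by monotone convergence the compensator is $Q$-integrable; then $\sup_{s\leq t}|M_s^{i,*}|$ is dominated by an integrable random variable and the local martingale is a genuine martingale. For $B^*$ no such step is needed, since L\'evy's theorem already delivers a true Brownian motion. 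With that replacement your proof is complete.
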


\end{document}